\documentclass{elsarticle}
\pagestyle{plain}
\usepackage{amssymb}
\usepackage{hyperref}
\usepackage{graphicx}
\usepackage{url}
\usepackage{amsmath}
\usepackage{latexsym}
\usepackage{graphicx}
\usepackage{subfigure}
\usepackage{algorithm}
\usepackage{algorithmic}
\usepackage{epsf}
\usepackage{array}
\usepackage{color}
\usepackage{multirow}
\usepackage{fancyhdr}
\pagestyle{fancy}
\lhead{Author's personal copy}
\rhead{Published version doi:10.1016/j.comcom.2012.11.009}
\cfoot{\thepage}

\newcommand{\states}{\mathcal{Q}}
\newcommand{\automaton}{A}
\newcommand{\trace}{T}
\newcommand{\sequence}{H}
\newcommand{\sequenceSansPassif}{M}
\newcommand{\alphabet}{\mathcal{A}}
\newcommand{\alphabetPass}{\mathcal{T}}
\newcommand{\nodes}{\mathcal{V}}
\newcommand{\ajout}[1]{\textcolor{black}{#1}}

\newtheorem{definition}{Definition}
\newtheorem{proposition}{Proposition}
\newproof{proof}{Proof}
\newtheorem{lemma}{Lemma}
\newdimen\demilargeur
\demilargeur=\textwidth
\divide\demilargeur by 2
\begin{document}


\title{Path Computation in Multi-Layer Multi-Domain Networks: A Language Theoretic Approach\footnote{\normalsize \copyright  \ 2013. Licensed under the Creative Commons CC-BY-NC-ND 4.0 license \url{http://creativecommons.org/licenses/by-nc-nd/4.0/}}}


%
%

\author[alu]{Mohamed Lamine Lamali\corref{cor1}}
\ead{mohamed\_lamine.lamali@alcatel-lucent.com}
\author[alu]{H\'elia Pouyllau}
\ead{helia.pouyllau@alcatel-lucent.com}
\author[uvs]{Dominique Barth}
\ead{dominique.barth@prism.uvsq.fr}

\cortext[cor1]{Corresponding author}

\address[alu]{Alcatel-Lucent Bell Labs,
Route de Villejust, 91620 Nozay, France}
\address[uvs]{Lab. PR\textit{i}SM, UMR8144, Universit\'e de Versailles,
45, av. des Etas-Unis, 78035 Versailles Cedex, France}

%



%
%

\begin{abstract}
Multi-layer networks are networks in which several protocols may coexist at different layers. The Pseudo-Wire architecture provides encapsulation and decapsulation functions of protocols over Packet-Switched Networks. In a multi-domain context, computing a path to support end-to-end services requires the consideration of encapsulation and decapsulation capabilities. It appears that graph models are not expressive enough to tackle this problem. In this paper, we propose a new model of heterogeneous networks using Automata Theory. A network is modeled as a Push-Down Automaton (PDA) which is able to capture the encapsulation and decapsulation capabilities, the PDA stack corresponding to the stack of encapsulated protocols. We provide polynomial algorithms that compute the shortest path either in hops or in the number of encapsulations and decapsulations along the inter-domain path, the latter reducing manual configurations and possible loops in the path.
\end{abstract}

\begin{keyword}
Multi-layer networks \sep Pseudo-Wire \sep Push-Down Automata
\end{keyword}

\maketitle 

\section{Introduction}
Most carrier-grade networks comprise multiple layers of technologies (e.g. Ethernet, IP, etc.). These layers are administrated by different control and/or management plane instances. The Pseudo-Wire (PWE3) architecture \cite{RFC3985} unifies control plane functions in heterogeneous networks to allow multi-layer services (e.g. Layer 2 VPN). To this end, it defines encapsulation (a protocol is encapsulated into another) and decapsulation (a protocol is unwrapped from another) functions, called \textit{adaptation functions} further in this paper. These functions allow the emulation of  services (e.g. Frame Relay, SDH, Ethernet, etc.) over Packet-Switched Networks (PSN, e.g. IP or MPLS).

Prior to a service deployment in a multi-layer network, the resources must be identified during the path computation process. The path computation must take into account the adaptation function capabilities in order to explore all resources and to ensure the end-to-end service deployment. The authors of \cite{RFC5659} defined the multi-segment Pseudo-Wire architecture for multi-domain networks. In~\cite{CHOKING}, the authors mention the problem of path determination over such an architecture, stressing the importance of having path computation solutions.

In such an architecture, the path computation should comply with protocol compatibility constraint: if a protocol is encapsulated in a node, it must be decapsulated in another node; the different encapsulation processes should be transparent to the source and target nodes. Thus, some nodes may be physically connected but, due to protocol incompatibility, no feasible path (i.e., which comply with protocol compatibility) can be found between them. This constraint leads to non trivial characteristics of a shortest path \cite{Dijkstra2009}: i) it may involve loops (involving the same link several times but with different protocols); ii) its sub-paths may not be feasible. Computing such a path is challenging and cannot be performed by classical shortest path algorithms.

Currently, the configuration of these functions is manually achieved within each network domain: when an encapsulation function is used, the corresponding decapsulation function is applied within the domain boundaries.
In large-scale carrier-grade networks or in multi-domain networks, restricting the location of the adaptation functions to the domain boundaries might lead to ignore feasible end-to-end paths leading to a signaling failure. Hence, in the path computation process, it must be possible to nest several encapsulations  (e.g. Ethernet over MPLS over SDH). A decapsulation should not be restricted to the same domain as its corresponding encapsulation. This allows the exploration of more possible paths and new resources in the path computation.

The problem we address is to compute the shortest feasible path either in the number of nodes or in the number of involved (and possibly nested) adaptation functions. The latter is motivated by two goals: $i)$  as such functions are manually configured on router interfaces, minimizing their number would simplify the signaling phase when provisioning the path; $ii)$ as our algorithms do not allow loops without adaptation functions (loops without encapsulations or decapsulations are useless and can be deleted), reducing the number of adaptation functions  leads to reducing the number of loops. Reducing the number of loops is important because  if there are several loops involving the same link, the available bandwidth on this link may be exceeded.

The authors of \cite{K09,Dijkstra2008} focused on the problem of computing a path in a multi-layer network under bandwidth constraints. In \cite{LP11}, we demonstrated that the problem under multiple Quality of Service constraints is NP-Complete. In this paper we demonstrate that the problem without QoS constraints is polynomial, and we provide algorithms to compute the shortest path. These algorithms use a new model of multi-layer networks based on Push-Down Automata (PDAs). The encapsulation and decapsulation functions are designed as pushes and pops in a PDA respectively, the PDA stack allowing to memorize the nested protocols. If the goal is to minimize the number of adaptation functions, the PDA is transformed in order to bypass sub-paths without adaptation functions. The PDA or transformed PDA is then converted into a Context-Free Grammar (CFG) using a method of \cite{Hop06}. A shortest word, either corresponding to the shortest path in nodes or in adaptation functions, is generated from this CFG.



This paper extends the work published in \cite{lamali2012}, providing the detailed proofs of the correctness of the algorithms and including new algorithms (transforming the PDA, converting the PDA to a CFG, generating the shortest word). Furthermore, the complexity analyses are refined and the total worst case complexity of the path computation is significantly reduced.

This paper is organized as follows: Section~\ref{sec:motiv} recalls the context of multi-layer multi-domain networks and the related work on path computation in such networks and presents the proposed approach; Section~\ref{sec:model} provides a model of multi-layer multi-domain networks and a formal definition of the problem; Section~\ref{sec:graph_pda} explains how a network is converted into a PDA and provides the complexity of this transformation; finally, Section~\ref{sec:shortest} gives the different methods that compute the shortest path in nodes or in encapsulations/decapsulation.

In order to ease the paper reading, a table summarizing the used notations is provided in \ref{sec:appendix}.

\section{Path computation in Pseudo-Wire networks}
\label{sec:motiv}
Some standards define the emulation of lower layer protocols  over a PSN (e.g. Ethernet over MPLS, \cite{RFC4448}, Time-Division Multiplexing (TDM) over IP, \cite{RFC5087}). For instance, one node in  the network encapsulates the layer $2$ frames in layer $3$ packets and another node unwraps them. This allow to cross a part of the network by emulating a lower layer protocol, and thus to overcome protocol incompatibilities.


The PWE3 architecture \cite{RFC3985}, as well as  the multi-layer networking description of \cite{RFC5212}, assumes an exhaustive knowledge of the network states. This assumption is not valid at the multi-domain scale. Thus, the authors of \cite{RFC5659} defines an architecture for extending the Pseudo-Wire emulation across multiple PSNs segments.
The authors of \cite{CHOKING} stress the importance of path determination in such a context and suggest it to be an off-line management task. They also suggest to use the Path Computation Element architecture (PCE) \cite{PCE}, which is adapted to the multi-domain context. It could be a control plane container for solution detailed in this paper. It would require protocol and data structure extensions in order to add encapsulation/decapsulation capabilities in the PCE data model. 



\subsection{Related work on path computation in multi-layer networks} 
The problem of path computation in heterogeneous networks raised first at the optical layer. Due to technology incompatibilities (different wavelengths, different encodings, etc.), it soon became clear that classical graph models cannot capture these incompatibilities thus forbidding classical routing algorithms to tackle this problem. The authors of  \cite{Chlamtac1996} propose a \textit{Wavelength Graph Model} instead of the classical network graph models to resolve the problem of wavelength continuity. The authors of \cite{Zhu2003} propose an \textit{Auxiliary Graph Model} to resolve the problems of traffic grooming and wavelength continuity in heterogeneous WDM mesh networks, this model is later simplified in \cite{Yao2005}. The common feature of these works is that they model each physical device as several nodes, each node being indexed by a technology. The existence of an edge depends on the existence of a physical link, but also on the technology compatibility. In \cite{Gong2008}, the authors take into account the compatibility constraints on several layers: wavelength continuity, label continuity, etc. They propose a \textit{Channel Graph Model} to resolve the multi-layer path computation problem. The proposed models and algorithms take into account protocol and technology but ignore the encapsulation/decapsulation (adaptation) capabilities. As they do not have a stack to store the encapsulated protocols, they cannot model the PWE3 architecture. 

In the PWE3 architecture, the compatibility between two technologies on a layer depends also on the encapsulated protocols on the lower layer. In \cite{Dijkstra2008}, the authors addressed the problem of computing the shortest path in the context of the ITU-T G.805 recommendations on adaptation functions. They stress the lack of solutions on path selection and the limitations of graph theory to handle this problem. The authors of \cite{Dijkstra2009} present the specificities of a multi-layer path computation taking into account the encapsulation/decapsulation (adaptation) capabilities: the shortest path can contain loops and its sub-paths may not satisfy the compatibility constraints. They provide an example of topology where classical routing algorithms cannot find the shortest path because it contains a loop. In \cite{K09}, the authors addressed the same problem in a multi-layer network qualifying it as an NP-Complete problem. The NP-Completeness comes from the problem definition as they consider that the loops in the path can exceed the available bandwidth, because if the path involves the same link several times it may overload this link. They aim to select the shortest path in nodes and provide new graph models allowing to express the adaptation capabilities. They propose a Breadth-First Search algorithm which stores all possible paths and selects the shortest one. This algorithm has an exponential time complexity.

In the problem we consider, we exclude bandwidth constraints and propose a solution for minimizing the number of encapsulations and decapsulations. Our algorithm does not allow loops without adaptation functions, the only loops that may exist involve encapsulations or decapsulations. Thus, minimizing the number of adaptation functions in the path also leads to minimizing the number of loops - and avoiding them if a loop-free feasible path with less encapsulations exists.

\subsection{Proposed approach}
In this work, we propose a new multi-domain multi-layer network model wich takes into account encapsulation and decapsulation capabilities.
To the best of our knowledge no previous work has considered this problem at the multi-domain scale. It induces to go beyond the domain boundaries allowing\textit{ multi-domain compatibility} to determine a \textit{feasible inter-domain path}: when an encapsulation for a given protocol is realized in one domain its corresponding decapsulation must be done in another.  It appears that PDAs are naturally adapted to model the encapsulation and decapsulation capabilities, as push and pop operations easily model encapsulations and decapsulations, and the PDA stack can model the stack of encapsulated protocols. By using powerful tools of Automata and Language Theory, we propose polynomial algorithms that generate the shortest sequence of protocols of a feasible path. This sequence allows to find the shortest feasible path.

Furthermore, we consider two kind of objectives: either the well-known objective of minimizing the number of hops or the objective of minimizing the number of adaptation functions. The latter is motivated by the fact that it is equivalent to minimize the number of configuration operations, which are often done manually and can be quite complex. It is also motivated by reducing the number of possible loops (as the number of adaptation functions involved in a path is correlated with the number of loops), and thus avoiding to use the same link several times and to exhaust the available bandwidth on it.

Figure \ref{fig:approach} summarizes our proposed approach. It presents the different models leading to the shortest path and the algorithms computing them. 

\begin{enumerate}[I.]
	\item Convert a multi-domain Pseudo-Wire network into a PDA,
	\begin{enumerate}[i.]
		\item If the goal is to minimize the number of adaptation functions, transform the PDA to bypass the ``passive" functions (i.e. no protocol adaptation),
		\item Else let the PDA as is,
	\end{enumerate}
	\item Derive a CFG from the PDA or the transformed PDA,
	\item Determine the ``shortest" word generated by the CFG,
	\item Identify the shortest path from the shortest word.
\end{enumerate}

\begin{figure}[htb]
\center
   \includegraphics[scale=0.37]{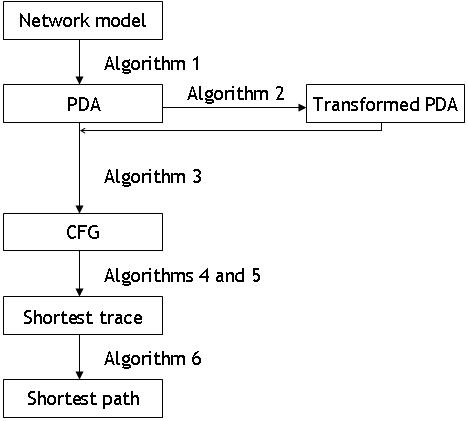}
   \caption{Proposed approach to compute the shortest feasible path.}
\label{fig:approach}
\end{figure}

Compared to the preliminary version of this work \cite{lamali2012}, we detail the proofs of correctness and refine the complexity of our algorithms. We also provide the detailed algorithm which converts the PDA to a CFG, and we propose a new method to generate the shortest word which is linear in the length of the shortest word.

\ref{sec:appendix} summarizes the notations used in this paper.
\section{Multi-layer multi-domain network model}
\label{sec:model}

A multi-domain network having routers with encapsulation/decapsulation capabilities can be defined as a $3$-tuple: a directed graph $G=(\mathcal{V},E)$ modeling the routers of a multi-domain network, we consider a pair of vertices $(S,D)$ in $G$ corresponding to the source and the destination of the path we focus on; a finite alphabet ${\alphabet}=\{a, b, c,\dots\}$ in which each letter is a protocol; an encapsulation or a decapsulation function is a pair of different letters in the alphabet $\alphabet$:
\begin{itemize}
\item Figure~\ref{fig:encap} illustrates the encapsulation of the protocol $x$ by the node $U$ in the protocol $y$; 
\item Figure~\ref{fig:decap} illustrates that the protocol $x$ is unwrapped by the node $U$ from the protocol $y$; 
\item Figure~\ref{fig:passive} illustrates that the protocol $x$ transparently crosses the node $U$ (no encapsulation or decapsulation function is applied). Such pairs are referred as \emph{passive} further in this paper. 
\end{itemize}
We denote by $\mathcal{ED}$ and by $\overline{\mathcal {ED}}$ the set of all possible encapsulation functions and decapsulation functions respectively, (i.e., ${\mathcal{ED}} \subset {\alphabet}^2$). A subset $P(U)$ of $\mathcal{ED}\cup \overline{\mathcal{ED}}$ indicates the set of encapsulation, passive and decapsulation functions supported by vertex $U \in {\mathcal{V}}$. We define $In(U)=\{a \in {\alphabet}~s.t.\ \exists b \in {\alphabet}~s.t.~(a,b)\ \text{or}\ \overline{(b,a)} \in P(U) \}$ and $Out(U)=\{b \in {\alphabet}~s.t.~\exists a \in {\alphabet}~s.t.~(a,b) \mbox{ or } \overline{(b,a)} \in P(U)\}$. The set ${\alphabetPass}(U)=\{a\in \alphabet\ s.t.\ (a,a) \in P(U) \}$ is the set of protocols that can passively cross the node $U$. 

\begin{figure*}[htb]
\centering
\subfigure[Encapsulation of protocol $x$ in protocol~$y$, $(x,y)\in P(U)$]{\label{fig:encap}\includegraphics[scale=0.25]{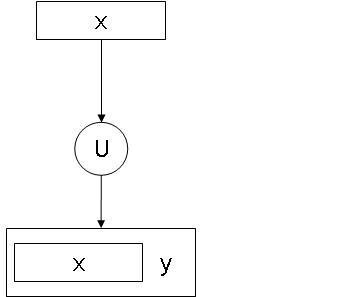}}
\hspace{10pt}
\subfigure[Decapsulation of protocol $x$ from protocol $y$, $\overline{(x,y)}\in P(U)$] {\label{fig:decap}\includegraphics[scale=0.25]{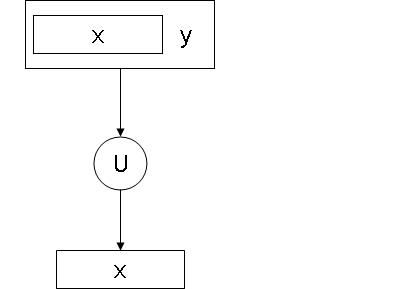}}
\hspace{10pt}
\subfigure[Passive crossing, $(x,x)\in P(U)$]{\label{fig:passive}\includegraphics[scale=0.25]{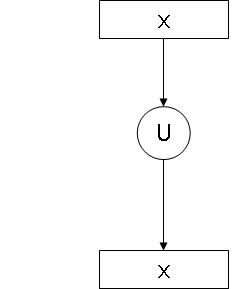}}
\caption{Different transitions when a protocol crosses a node $U$}
\label{encap-decap}
\end{figure*}

%

Considering a network $N=(G=({\mathcal{V}},E),{\alphabet},P)$, a source $S \in {\mathcal{V}}$, a destination $D \in {\mathcal{ V}}$ and a path $C=S,x^1, U_1,x^2, \ldots, U_{n-1},x^{n},D$ where each $U_i$ is a vertex in ${\mathcal{V}}$ and each $x^i \in {\alphabet} \cup \overline{\alphabet}$ (where $\overline{{\alphabet}} = \{\overline{a} : a \in {\alphabet}\}$).

\begin{itemize}
\item $\trace_C = x^1\dots x^n$ represents the sequence of protocols which is used over the path $C$. It is called the \emph{trace} of $C$. For each $x^i$:
	\begin{itemize}
		\item $x^i = a$ and $x^{i+1} = b$ or $\overline{b}$, means that $U_i$ encapsulates the protocol $a$ in $b$ ( $a,b,\overline{b} \in {\alphabet} \cup \overline{\alphabet}$) 
		\item $x^i = \overline{a}$ and $x^{i+1} = b$ or $\overline{b}$ means that $U_i$ unwraps the protocol $b$ from $a$
		\item $x^i = a$ and $x^{i+1} = a$ or $\overline{a}$ means that $U_i$ passively transports $a$
	\end{itemize}
	
\item The \textit{transition sequence} of $C$, denoted $\sequence_C$, is the sequence $\beta_1, \ldots , \beta_n$ obtained from $C$ s.t. for $i = 1..n$:
			\begin{itemize}
				\item if $x^i = a \in {\alphabet}$ and $x^{i+1} = b \in {\alphabet}$ or $x^{i+1} = \overline{b} \in \overline{\alphabet}$ then $\beta_i = (a, b)$ 
				\item if $x^i =\overline{b} \in \overline{\alphabet}$ and $x^{i+1} = a \in {\alphabet} \verb+\+ \{b\}$ or $x^{i+1} = \overline{a} \in \overline{\alphabet} \verb+\+ \{\overline{b}\}$ then $\beta_i = \overline{(a,b)}$ 
				
				 Note that the pair $(a,a),\ a\in\alphabet$ can appear in a transition sequence, as it represents a passive function. However, according the definition above, a pair $\overline{(a,a)},\ a\in\alphabet$ cannot appear, as it is forbidden to encapsulate (and thus to decapsulate) a protocol $a$ in itself.
			\end{itemize}
\item The \textit{well-parenthesized sequence} of $C$, denoted $\sequenceSansPassif_C=\beta'_1, \ldots , \beta'_m$, is obtained from $\sequence_C$ by deleting each passive transition $\beta_i$ s.t. $\beta_i=(a,a)$ and $a \in {\alphabet}$
\end{itemize}

{\noindent}{\bf Example.} Consider the path $C=S,a,U,b,V,\overline{b},W,a,D$ in the network illustrated by Fig.~ \ref{fig:network}. The transition sequence corresponding to $C$ is $\sequence_C=(a,b),(b,b),{\overline {(a,b)}}$ and its trace is $\trace_C=ab\overline{b}a$. The well-parenthesized sequence from $C$ is $\sequenceSansPassif_C=(a,b),{\overline {(a,b)}}$.

Let $\epsilon$ denotes the empty word, ``$\bullet$'' indicates the concatenation operation, and $\sequence_C$ denotes the transition sequence obtained from a path $C$ as explained above. The following definitions give a formal characterization of the feasible paths.
\begin{definition} A sequence $\sequenceSansPassif_C$ from $\sequence_C$ is \emph{valid} if and only if $\sequenceSansPassif_C \in {\mathcal{L}}$, where ${\mathcal{L}}$ is the formal language recursively defined by:
\begin{equation*} 
\mathcal{L} = \epsilon \cup \left(\bigcup_{(x,y)\in\alphabet^2}(x, y)\bullet{\mathcal{L}}\bullet\overline{(x,y)}\right)\bullet{\mathcal{L}} 
\end{equation*}
\end{definition}

\begin{definition} A path $C$ is a \emph{feasible path} in $N$ from $S$ to $D$ if:
	\begin{itemize}
		\item $S,U_1,\ldots,U_{n-1},D$ is a path in $G=(\mathcal{V},E)$,
		\item The well-parenthesized sequence $\sequenceSansPassif_C$ from $C$ is valid,
		\item For each $i \in \{1,\dots,n\}$:
				\begin{itemize}
				\item if $x^i = a$ and $x^{i+1} = b$ or $\overline{b}$ then $(a, b) \in P(U_i)$
				\item if $x^i = \overline{a}$ and $x^{i+1} = b$ or $\overline{b}$ then $\overline{(a, b)} \in P(U_i)$
				\item if $x^i = a$ and $x^{i+1} = a$ or $\overline{a}$ then $a \in {\alphabetPass}(U_i)$
				\end{itemize}
	\end{itemize}
\end{definition}
The language ${\mathcal L}$ of valid sequences is known as the \emph{generalized Dyck language} \cite{dyckwords}. It is well-known that this language is context free but not regular. Thus, push-down automata are naturally adapted to model this problem.
\\

{\noindent}{\bf Example.} The multi-domain network illustrated by Fig. \ref{fig:network} has $6$ routers and two protocols labeled $a$ and $b$. Adaptation function capabilities are indicated below each node. For example, the node $U$ can encapsulate the protocol $a$ in the protocol $b$ (function denoted by $(a,b)$) or can passively transmit the protocol $a$ (function denoted by $(a,a)$). In this multi-domain network, the only feasible path between $S$ and $D$ is $S,a,U,b,V,\bar{b},W,a,D$ and involves the encapsulation of the protocol $a$ in the protocol $b$ by the node $U$, the passive transmission of the protocol $b$ by the node $V$ and the decapsulation of the protocol $a$ from $b$ by the node $W$. (functions $(a,b)$, $(b,b)$ and $\overline{(a,b)}$ respectively).
\\

{\noindent}{\bf Problem definition.} As explained above, our goal is to find a feasible multi-domain path. Furthermore, we set as an objective function either the size of the sequence of adaptation functions or the size of the path in number of nodes. Hence, the problem we aim to solve can be defined as follows:
\begin{equation*}
\begin{split}
 &\min_C\ \   | H_C | \text{ or } | M_C |\\
 &\ \text{s.t.}\  C \text{ is a feasible path }
\end{split}
\end{equation*}


\section{From the network model to a PDA}
\label{sec:graph_pda}
In this section, we address the conversion from a network to a PDA. Algorithm~\ref{algo1} takes as input a network $N=(G=({\mathcal{V}},E),{\alphabet},P)$ and converts it to a PDA $\automaton_N = (\states, \Sigma, \Gamma, \delta, S_\automaton, Z_0, \{D_\automaton\} )$, where $\states$ is the set of states of the PDA, $\Sigma$ the input alphabet, $\Gamma$ the stack alphabet, $\delta$ the transition relation, $S_\automaton$ the start state, $Z_0$ the initial stack symbol and $D_\automaton$ the accepting state, $\epsilon$ is the empty string. The automaton $\automaton_N$ from $N$ is obtained as follows: \

\begin{itemize}
	\item Create a state $U_x$ of the automaton for each $U \in {\mathcal V}$ and each $x \in In(U)$, except for the source node $S$ for which a single state $S_\automaton$ is created,
	\item The top of the stack is the last encapsulated protocol,
	\item If the current state is $U_x$ then the current protocol is $x$,
	\item The adaptation functions will be converted into transitions in the PDA. A transition $t\in\delta$ is denoted $(U_x, \langle x, \alpha,\beta\rangle, V_y)$ where $U_x\in\states$ is the state of the PDA before the transition, $x\in\Sigma$ is the input character, $\alpha\in\Gamma$ is the top of the stack before the transition (it is popped by the transition), $\beta\in\Gamma^*$ is the sequence of symbols pushed to the stack, and $V_y$ is the state of the PDA after the transition. Thus if $\beta=\alpha$ then $t$ is a passive transition, if $\beta=x\alpha$ then $t$ is a push of $x$, and if $\beta=\emptyset$ then $t$ is a pop,
	\item A passive transition of a protocol $x$ across a node $U$ is modeled as a transition without push or pop between the state $U_x$ and the following state $V_x$. It is denoted $(U_x, \langle x, \alpha, \alpha\rangle, V_x)$,
	\item An encapsulation of a protocol $x$ in a protocol $y$ by a node $U$ is modeled as a push of the character $x$ in the stack between the state $U_x$ and the following state $V_y$. It is denoted $(U_x, \langle x, \alpha, x\alpha\rangle, V_y)$\footnote{Note that, even if $x=\overline{a} \in \overline{\alphabet}$, the transition has the form $(U_a, \langle \overline{a}, \alpha, a\alpha\rangle, V_y)$. Characters in $\overline{\alphabet}$ are only used as input characters. Characters indexing states and pushed characters in the stack are their equivalent in $\alphabet$.} ,
	\item A decapsulation of $y$ from $x$ by a node $U$ is modeled as a pop of the protocol $y$ from the stack. It is denoted $(U_x, \langle \overline{x}, y, \emptyset\rangle, V_y)$.

\end{itemize}

\begin{algorithm}
\caption{Convert a network to a PDA}
\label{algo1}
\begin{algorithmic}
\STATE Input: a network $N = (G = ({\mathcal V},E), {\alphabet},P)$, a source $S$ and a destination $D$
\STATE Output: push-down automaton $\automaton_N = (\states, \Sigma, \Gamma, \delta, S_\automaton, Z_0, \{D_\automaton\} )$
\STATE (1) $\Sigma \gets \alphabet \cup \overline{\alphabet} $ ; $\Gamma \gets \alphabet \cup \{ Z_0\} $
\STATE (2) Create a single state $S_\automaton$ for the node $S$
\STATE (3) For each node $U\neq S$ and each protocol $x \in In(U)$, create a state $U_x$
\STATE (4) For each state $U_x$ s.t. $(S,U)\in E$ and $x\in Out(S)$
\STATE \ \ \ \ \	\ \ \ \ Create a transition $(S_\automaton, \langle\epsilon, Z_0, Z_0\rangle, U_x)$
\STATE (5) For each link $(U,V)\hspace{-0.8mm}\in\hspace{-0.8mm}E$ s.t. $U\hspace{-0.8mm}\neq\hspace{-0.8mm}S$ and for each $(x,y)\hspace{-0.8mm}\in\hspace{-0.8mm}P(U)$ and each $\alpha\hspace{-0.8mm}\in\hspace{-0.8mm}\Gamma \verb+\+ \{x\}$
\STATE \ \ \ (5.1) If $x\in {\alphabetPass}(U)\cap In(V)$ Create a transition $(U_x, \langle x, \alpha, \alpha\rangle, V_x)$\COMMENT{passive trans.}
\STATE \ \ \ (5.2) If $x\neq y$ and $y\in In(V)$ Create a transition $(U_x, \langle x, \alpha, x\alpha\rangle, V_y)$\COMMENT{encap.}
\STATE (6) For each link $(U,V) \in E$ s.t. $U\neq S$ and for each $\overline{(y,x)}\in P(U)$ 
\STATE \ \ \ (6.1) If $x\in In(V)$ Create a transition $(U_x, \langle \overline{x}, y, \emptyset\rangle, V_y)$\COMMENT{decap.} 
\STATE (7) Create a fictitious final state $D_\automaton$.
\STATE (8) For each $x\in In(D)$ and each $\alpha \in \Gamma \verb+\+ \{x\}$ Create a transition $(D_x, \langle x, Z_0, \emptyset\rangle, D_\automaton)$

\end{algorithmic}
\end{algorithm}
\noindent{\bf Complexity.} Each node $U$ from the graph generates $|In(U)|$ states, except the source node $S$. A fictitious final state is added. Thus, the number of states is at worst $2 + (|{\mathcal V}|-1)\times|{\alphabet}|$ so in $O(|{\mathcal V}|\times|{\alphabet}|)$. The worst case complexity of {algorithm}~\ref{algo1} is in $O(\max ((|{\mathcal V}|\times|{\alphabet}|), (|E|\times((|\alphabet|\times|\mathcal{ED}|) + |\overline{\mathcal{ED}}|)))))$. We assume that the network is connected, so $|E|\geq|{\mathcal V}|-1$. Since ${\mathcal{ED}}$ is a subset of ${\mathcal{A}}^2$, then $|{\mathcal{ED}}| < |{\mathcal{A}}|^2$ and $|\overline{\mathcal{ED}}| <|{\mathcal{A}}|^2$. Thus, the upper bound complexity is in $O(|\alphabet|^3\times|E|)$, which is also an upper bound for the number of transitions.

\begin{proposition}
\label{net_pda}
Considering a network $N=(G=({\mathcal V},E),{\alphabet},P)$, a source $S \in {\mathcal V}$ and a destination $D \in {\mathcal V}$, the language recognized by $\automaton_N$ is the set of traces of \emph {the feasible paths} from $S$ to $D$ in $N$.
\end{proposition}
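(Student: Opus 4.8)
The plan is to prove the two inclusions simultaneously by exhibiting a correspondence between feasible paths from $S$ to $D$ in $N$ and accepting computations of $\automaton_N$, under which the trace $\trace_C$ of a path coincides exactly with the input word consumed along the computation. Acceptance here is by final state $D_\automaton$, so an accepting computation is forced by construction to begin with the $\epsilon$-move of step~(4) out of $S_\automaton$, to use only the transitions of steps~(5)--(6) in the interior, and to terminate with the move of step~(8) into $D_\automaton$. The central idea is that the pair (current state, stack contents) of $\automaton_N$ faithfully encodes the pair (current node and protocol, stack of nested encapsulated protocols) along a path, so that the language equality reduces to matching the recursive structure of $\mathcal{L}$ against the LIFO discipline of the stack.

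For the direction from feasible paths to accepted traces, I would take a feasible path $C=S,x^1,U_1,\dots,U_{n-1},x^n,D$ and build a computation inductively along $\sequence_C$. The initial $\epsilon$-transition of step~(4) is enabled because $x^1\in Out(S)$ and $(S,U_1)\in E$, and it leaves the stack at $Z_0$. Each subsequent entry of $\sequence_C$ is then realized by the matching algorithmic transition: a passive entry $(a,a)$ by step~(5.1), an encapsulation by the push of step~(5.2), and a decapsulation by the pop of step~(6.1); the feasibility conditions on $P(U_i)$ are exactly what guarantees the needed transition exists, and the fact that $S,U_1,\dots,U_{n-1},D$ is a path in $G$ guarantees each step runs along an existing edge. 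The main obstacle, and the heart of the argument, is the stack invariant: I would show by induction that after consuming a prefix of $\sequence_C$ the symbols above $Z_0$ are precisely the protocols that have been encapsulated but not yet decapsulated, innermost on top. Since a push of $x$ arising from an encapsulation $(x,y)$ is matched in $\mathcal{L}$ against the pop of the same $x$ arising from the decapsulation $\overline{(x,y)}$, the hypothesis $\sequenceSansPassif_C\in\mathcal{L}$ is exactly the statement that these pushes and pops form a balanced generalized Dyck word; this is what guarantees the stack never underflows past $Z_0$ and holds only $Z_0$ at the end, so that the terminal transition of step~(8) can fire. Care is needed because passive transitions occur in $\sequence_C$ but are erased in $\sequenceSansPassif_C$; one must observe that step~(5.1) leaves the stack untouched, so deleting passive moves does not disturb the balance. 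The consumed input is then $x^1\cdots x^n=\trace_C$.

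For the converse, I would start from an accepting computation on a word $w$ and read off the sequence of visited states $S_\automaton,(U_1)_{\cdot},\dots,(U_{n-1})_{\cdot},D_{\cdot},D_\automaton$. The node indices yield a sequence $S,U_1,\dots,U_{n-1},D$ that is a path in $G$, since steps~(5)--(6) create transitions only along edges of $E$, and the state and input labels recover the trace symbols $x^i\in\alphabet\cup\overline{\alphabet}$ (barred exactly when the corresponding move is a decapsulation, per the footnote identifying state indices with their representatives in $\alphabet$). Each interior transition was created only when the associated adaptation function lies in the appropriate $P(U_i)$, which is precisely the per-node feasibility requirement, and the consumed input again equals $\trace_C$. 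Finally, the stack discipline forces $\sequenceSansPassif_C\in\mathcal{L}$: the computation starts and ends with $Z_0$ alone on the stack and, since the only transition popping $Z_0$ is the terminal one of step~(8), it never removes $Z_0$ prematurely, so the recorded pushes and pops are balanced; reading the invariant of the previous paragraph in reverse, this is exactly membership of $\sequenceSansPassif_C$ in the generalized Dyck language. Hence $C$ is feasible and $w=\trace_C$, which together with the first direction yields the claimed equality of languages.
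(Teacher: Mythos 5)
Your proposal is correct and takes essentially the same route as the paper's proof: you build the same step-by-step correspondence between the entries of $\sequence_C$ and the transitions created in steps (4)--(6) and (8) of Algorithm~\ref{algo1}, with the validity (well-parenthesized property) of $\sequenceSansPassif_C$ accounting for the stack never underflowing and returning to $Z_0$ before the final move. If anything, your write-up is more detailed than the paper's, which leaves the stack invariant implicit and dispatches the converse direction in a single sentence, whereas you spell out why only step (8) pops $Z_0$ and why the stack discipline forces $\sequenceSansPassif_C\in\mathcal{L}$.
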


\begin{proof}Let a feasible path $C=Sx^1U_1x^2 \ldots x^iU_i x^{i+1} \ldots U_{n-1}x^nD$ in a network $N$ and its trace $\trace_C=x^{1}\ldots x^{i}x^{i+1}\ldots x^n$
leading to a valid sequence $\sequenceSansPassif_C$. We aim to demonstrate that: $i)$ $\trace_C$ is recognized by $\automaton_N$ and $ii)$ $U_1,\ldots,U_{n-1}$ is a path in $N$ and $iii)$ There exists a feasible path corresponding to $\trace_C$ in $N$ if $\trace_C$ is recognized by the PDA $\automaton_N$. It is sufficient to show that $C=S,x^1,U_1,x^2, \ldots, U_{n-1 },x^n,D$ is a feasible path in $N$ if and only if $\trace_C$ is recognized by $\automaton_N$ (i.e the final state is reached and the stack is empty).

From $C$ we deduce the following path in $\automaton_N$ defined as a sequence of transitions. This path begins with transitions: $(S_\automaton,\langle \epsilon,Z_0,Z_0\rangle, (U_i)_{x^1})$ for each $U_i$ s.t. $x^1 \in In(U_i)$. Hence, transition $(S_\automaton,\langle \epsilon,Z_0,Z_0\rangle , (U_1)_{x^1})$ belongs to this set. Then, the path in $\automaton_N$ follows the order induced by $\trace_C$, for each element in $x^1, \ldots , x^n$, we consider one or two transitions in $\automaton_N$ as follows:
\begin{itemize}
\item if $x^i=a$ then
\begin{itemize}
\item if $x^{i+1}=a$ or $x^{i+1}=\overline{a}$ consider the transition $((U_i)_{a},\langle a,\alpha,\alpha\rangle ,(U_{i+1})_{a})$ where $\alpha \in \Gamma\verb+\+\{a\}$,
\item if $x^{i+1}=b$ or $x^{i+1}=\overline{b}$ consider the transition $((U_i)_{a},\langle a,\alpha,a\alpha\rangle ,(U_{i+1})_{b})$, where $\alpha \in \Gamma\verb+\+\{a\}$
\end{itemize}
\item else (i.e., if $x^i=\overline {a}$) then consider the transition
$( (U_i)_{a},~\langle \overline{a},b, \emptyset\rangle,~ (U_{i+1})_{b})$.
\end{itemize}
Note that, if $i=n$ then $U_{i+1}=D$.

It is clear that this sequence of transitions is a path in $\automaton_N$ recognizing the trace $\trace_C$. Since $\sequenceSansPassif_C$ is a well parenthesized word, if $x^i=\overline {a}$ then the head of the stack when reaching state $(U_i)_{x^i}$ contains $aZ_0$ and when reaching state $(U_{i+1})_{x^{i+1}}$, the head of the stack is reduced to $Z_0$. Then, when reaching state $D_\automaton$, the stack is empty.

Conversely, we can show with a similar construction that any well recognized word on a path in $\automaton_N$ induces one unique trace $T_C$ of feasible paths in $G$ from $S$ to $D$. \qed
\end{proof}

\noindent{\bf Example}. Figure~\ref{fig:pda} is an example of output of algorithm \ref{algo1}. The algorithm transforms the network illustrated by Fig. \ref{fig:network} into a PDA of Fig~\ref{fig:pda}. For instance, the link $(U,V)$ is transformed into the transitions $(U_a,\langle a,Z_0,aZ_0\rangle,V_b)$ and $(U_a,\langle a,b,ab\rangle,V_b)$ (pushes) because the node $U$ should encapsulate the protocol $a$ in the protocol $b$ using the adaptation function $(a,b)$ before transmitting to the node $V$. The link $(W,D)$ is transformed into the transitions $(W_b,\langle b,Z_0,bZ_0\rangle,D_a)$ and $(W_a,\langle b,a,ba\rangle,D_a)$ (pushes) according to the adaptation function $(a,b)$. It is also transformed into the transition $(W_b,\langle\overline{b},a,\emptyset\rangle,D_a)$ (pop) because the node $W$ can decapsulate the protocol $a$ from $b$ using the adaptation function $\overline{(a,b)}$ before transmitting to the node $D$. The link $(V,W)$ is transformed into the transitions $(V_b,\langle b,Z_0,Z_0\rangle,W_b)$ and $(V_b,\langle b,a,a\rangle,W_b)$ (passive transitions) according the capability of the node $V$ (adaptation function $(b,b)$).

\section{The shortest feasible path}
\label{sec:shortest}
In section~\ref{sec:graph_pda}, we provided a method to build a PDA allowing to determine the feasible paths. The next step is to minimize either the number of nodes or the number of adaptation functions. The method to minimize the number of nodes uses directly the PDA as described in section~\ref{sec:min_nodes}. But to minimize the number of adaptation functions, the PDA is transformed in order to bypass the sub-paths without any adaptation function, as detailed in section~\ref{sec:pda_transformation}. Then, a CFG derived from the PDA (or the transformed PDA) generates words whose length is equivalent to the number of nodes (or to the number of adaptations). An algorithm browses the CFG to determine the shortest word. Finally, another algorithm identifies the multi-domain path corresponding to this shortest word. 
\subsection{Minimizing the number of nodes}
\label{sec:min_nodes}
The number of characters in a word accepted by the automaton $\automaton_N$ is the number of links in the corresponding feasible path (each character is a protocol used on a link). Thus the step of automaton transformation (section~\ref{sec:pda_transformation}) should be skipped. The automaton $\automaton_N$ is directly transformed into a CFG, then the shortest word is generated as described in section~\ref{sec:shortest_word}. The corresponding feasible path is computed by algorithm \ref{find-C} described in section~\ref{sec:word-path}.

\subsection{Minimizing the number of adaptation functions} 
\label{sec:pda_transformation}
To enumerate only encapsulations and decapsulations in the length of each word (and thus minimize adaptation functions by finding the shortest word accepted), a transformed automaton $A_N'$ in which all sequences involving passive transitions are bypassed must be determined. The length of the shortest word accepted by $A_N'$ is the number of adaptation functions plus a fixed constant.

Let us define $\states_a$ ($a \in \alphabet$) as $\states_a = \{V_x \in \states | x = a\}$, and let $A_N^a$ be the \emph{sub-automaton} induced by $\states_a$. By analogy with an induced subgraph, an induced sub-automaton is a multigraph with labeled edges such that the set of vertices is $\states_a$ and the set of edges is the set of transitions between elements of $\states_a$. Since there are only passive transitions between two states in $\states_a$, all paths in the sub-automaton are passive. Let define $P(U_x,V_x)$ as the shortest path length between $U_x$ and $V_x$. This length can be computed between all pairs of nodes in $\states_a$ using the Floyd\--Warshall algorithm.
Let $Succ(V_x)$ be the set of successors of $V_x$ in the original automaton $A_N$, i.e., $Succ(V_x) = \{W_y \in \states | \exists (V_x, \langle x, \alpha, \beta \rangle   , W_y) \in \delta\}$.

Algorithm \ref{algo2} takes as input $\automaton_N$ and computes the transformed automaton $\automaton_N' = (\states', \Sigma', \Gamma', \delta', S_\automaton, Z_0, \{D_A\} )$. $\automaton_N'$ is initialized with the values of $\automaton_N$. Then, the algorithm computes the sub-automaton for each character $x \in \alphabet$ (step~(1)) and the length values $P(U_x,V_x)$ for each pair of states in the sub-automaton (step (2)). Each path between a pair of states is a sequence of passive transitions. If such a path exists (step~(3.1)), the algorithm adds transitions to $\delta'$ from $U_x$ to each state in $Succ(V_x)$ (steps (3.1.2) and (3.1.3)). These added transitions are the same that those which connect $V_x$ to its successors $W_y$, but with an input character indexed by the number of passive transitions between $U_x$ and $V_x$, (i.e., $P(U_x,V_x$)) plus one (indicating that there is a transition sequence which matches with a sequence of protocols $xx\dots x$ of length $P(U_x,V_x)+1$). The indexed character is added to the input alphabet $\Sigma'$ (step (3.1.1)).
\\

\begin{algorithm}
\caption{Transform automaton $\automaton_N$}
\label{algo2}
\begin{algorithmic}
\STATE Input: push-down automaton $\automaton_N = (\states, \Sigma, \Gamma, \delta, S_A, Z_0, \{D_A\} )$
\STATE Output: transformed push-down automaton $\automaton_N' = (\states', \Sigma', \Gamma', \delta', S_A, Z_0, \{D_A\} )$
\STATE $\states' \gets \states$, $ \Sigma' \gets \Sigma$, $\Gamma' \gets \Gamma$, $\delta' \gets \delta$
\STATE For each $x \in \alphabet$
\STATE \ \ \ (1) Compute $\automaton_N^x$
\STATE \ \ \ (2) Compute the distance $P(U_x,V_x)$ between all pairs of states $U_x$ and $ V_x$ in  $\automaton_N^x$
\STATE \ \ \ (3) For each $U_x\in \states_x$ and each $V_x$ in $\states_x$
\STATE \ \ \ \ \ \ (3.1) If $P(U_x,V_x) < \infty$ \COMMENT{there is a path between $U_x$ and $V_x$}
\STATE \ \ \ \ \ \ \ \ \ \ \ (3.1.1) Add $x_{P(U_x,V_x)+1}$ and $\overline{x}_{P(U_x,V_x)+1}$ to $\Sigma$
\STATE \ \ \ \ \ \ \ \ \ \ \ (3.1.2) For each $W_y \in Succ(V_x)\backslash\{U_x\}$ and each $(V_x, \langle x, \alpha, \beta \rangle   , W_y) \in \delta$
\STATE \ \ \ \ \ \ \ \ \ \ \ \ \ \ \ \ \ \ \ \ Add the transition $(U_x, \langle x_{P(U_x,V_x)+1}, \alpha, \beta \rangle   , W_y)$ to $\delta'$
\STATE \ \ \ \ \	\ \ \ \ \ \ (3.1.3) For each $W_y \in Succ(V_x)\backslash\{U_x\}$ and each $(V_x, \langle \overline{x}, \alpha, \beta \rangle   , W_y) \in \delta$
\STATE \ \ \ \ \	\ \ \ \ \ \ \ \ \ \ \ \ \ \ \ Add the transition $(U_x, \langle \overline{x}_{P(U_x,V_x)+1}, \alpha, \beta \rangle   , W_y)$ to $\delta'$
\end{algorithmic}
\end{algorithm}

\noindent{\bf Complexity.} Computing the sub-automaton (step (1)) is done in $O(|\delta|+|\states|)$ by checking all the nodes and all the transitions. Computing the shortest path length between all pairs of states is done by the Floyd\--Warshall algorithm in $O(|\states_x|^3)$. If x=y, there are at worst $|\alphabet|-1$ transitions between $V_x$ and $W_y$ in $\states$ (transitions in the form $(U_x, \langle x, \alpha, \alpha \rangle   , W_x)$ for all possible values of $\alpha$ except $x$ \---- as the construction of the automaton forbids the encapsulation of a protocol $x$ in $x$, and thus it is impossible to have $x$ as current protocol and at the top of the stack in the same time). If $x\neq y$, there are at worst $|\alphabet|$ transitions between $V_x$ and $W_y$ in $\states$ (the pop $(V_x, \langle \overline{x}, y, \emptyset \rangle   , W_y)$ and the pushes $(V_x, \langle x, \alpha, x\alpha \rangle   , W_y)$ for all possible values of $\alpha$ except $x$). And as $|Succ(V_x)|< |\states|$, the steps (3.1.2) and (3.1.3) are bounded by $O(|\states|\times|\alphabet|)$. However, a state belongs to only one sub-automaton, the complexity of algorithm~\ref{algo2} is  therefore bounded by $O\left(\sum_{x\in\alphabet} \left[ |\delta|+|\states|+|\states_x|^3+(|\states_x|^2\times|\states|\times|\alphabet|)\right]\right)$. And $\forall x\in \alphabet, |\states_x|\leq|\alphabet|$ (see algorithm~\ref{algo1} for the construction of the set of states $\states$). Thus, the complexity of algorithm~\ref{algo2} is in $O\left(\max\left(|\alphabet|\times(|\delta|+|\states|), |\alphabet|\times|\nodes|^3,|\alphabet|^2\times|\nodes|^2\times|\states|\right)\right)$, which corresponds to $O\left(\max(|\alphabet|^4\times|E|,|\alphabet|^3\times|\nodes|^3)\right)$ in the network model. The number of transitions in $\delta'$ is bounded by $|\delta|+O\left(\sum_{x\in\alphabet}(|\states_x|^2\times|\states|\times|\alphabet|)\right)$, which corresponds to $O(|\alphabet|^3\times|\nodes|^3)$ in the network model.
\\

\noindent{\bf Example.} Algorithm \ref{algo2} transforms the PDA in Fig.~\ref{fig:pda} into the PDA in Fig.~\ref{fig:tpda}. For the protocol $b$, the algorithm computes the sub-automaton induced by the states $V_b$, $W_b$ and $K_b$. The  distance $P(V_b,W_b)=1$ is then computed. Thus, to bypass the sequence of transitions $(V_b,\langle b,a,a \rangle   ,W_b)$ $(W_b,\langle \overline{b},a,\emptyset \rangle   ,D_a)$, the transition $(V_b,\langle \overline{b_2},a,\emptyset \rangle   ,D_a)$ is added. 
\\

Let $L(\automaton_N)$ be the set of words accepted by $\automaton_N$, and let $L(\automaton_N')$ be the set of words accepted by $\automaton_N'$. Let $f : \Sigma' \rightarrow \Sigma^*$ be a function s.t.:
\begin{itemize}
\item if $x_i = a_i \in \alphabet'$ then $f(x_i) =\underbrace{aa\dots aa}_{i\ \text{occurrences}}$
\item if $x_i = \overline{a_i} \in \overline{\alphabet'}$ then $f(x_i) =\underbrace{aa\dots a\overline{a}}_{i\ \text{occurrences}}$
\end{itemize}
The domain of $f$ is extended to ${(\Sigma')}^*$:  
\begin{equation*}
\begin{split}
f : {(\Sigma')}^* &\rightarrow \Sigma^* \\
w' = x_i^1x_j^2\dots x_k^n &\rightarrow f(w') = f(x_i^1)f(x_j^2)\dots f(x_k^n)
\end{split}
\end{equation*}
 For simplicity, we consider that $x$ and $x_1$ are the same character. $f(L(\automaton_N'))$ denotes the set of words accepted by $\automaton_N'$ transformed by $f$ (i.e. $f(L(\automaton_N')) = \{ f(w')\ s.t.\ w' \in L(\automaton_N')\}$). 
 
It is clear that $f$ is not a bijection ($f(x_ix_j) = f(x_{i+j}$)). So to operate the transformation between $L(\automaton_N)$ and $L(\automaton_N')$, we define $g : \Sigma^* \rightarrow (\Sigma')^* \ s.t.:$

\noindent for each $w = \underbrace{xx\dots x}_{i\ \text{occurrences}}\underbrace{yy\dots y}_{j\ \text{occurrences}}\dots\underbrace{zz\dots z}_{k\ \text{occurrences}} \in \Sigma^*$, $g(w) = x_iy_j\dots z_k$. 

{\noindent}In other words, $w' = g(w)$ is the shortest word in $(\Sigma')^*$ s.t. $f(w') = w$.\\

The following lemmas and proposition show that the set of words accepted by the transformed automaton is \textit{`equivalent'} to the set of words accepted by the original automaton. Equivalent in the meaning that, using $f$, each word accepted by the transformed automaton can be associated to a word accepted by the original one. In addition, the transformed automaton has a new propriety: there is a linear relation between the length of an accepted word and the minimum number of pushes (and pops) involved to accept it. This propriety allows to find the word requiring the minimum number of pushes accepted by the original automaton. This is done by finding the shortest word accepted by the transformed automaton.

\begin{lemma}
\label{prop3}
$f(L(\automaton_N'))$, the set of words accepted by $\automaton_N'$ and transformed by $f$, is equal to $L(\automaton_N)$, the set of words accepted by $\automaton_N$.
\end{lemma}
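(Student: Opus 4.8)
The plan is to establish the set equality by proving the two inclusions $L(\automaton_N)\subseteq f(L(\automaton_N'))$ and $f(L(\automaton_N'))\subseteq L(\automaton_N)$ separately. The forward inclusion is almost immediate from the fact that Algorithm~\ref{algo2} only \emph{adds} states, symbols and transitions (it initializes $\delta'\gets\delta$, $\Sigma'\gets\Sigma$); the substance of the lemma lies in the reverse inclusion, where every newly created ``indexed'' transition of $\automaton_N'$ has to be simulated inside $\automaton_N$ by the passive path it abbreviates.

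For $L(\automaton_N)\subseteq f(L(\automaton_N'))$, I would take any $w\in L(\automaton_N)$ with an accepting run of $\automaton_N$. Since $\delta\subseteq\delta'$ and, under the convention that $x$ and $x_1$ denote the same character, $\Sigma\subseteq\Sigma'$, this run is verbatim an accepting run of $\automaton_N'$, so $w\in L(\automaton_N')$. As $f$ is the identity on index-$1$ characters ($f(x_1)=x$ and $f(\overline{x}_1)=\overline{x}$), we get $f(w)=w$, hence $w=f(w)\in f(L(\automaton_N'))$.

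For the reverse inclusion $f(L(\automaton_N'))\subseteq L(\automaton_N)$, I would fix $w'\in L(\automaton_N')$ together with an accepting run $\rho'$ and rewrite $\rho'$ transition by transition into a run of $\automaton_N$ reading $f(w')$. Each original transition $t\in\delta$ reads an index-$1$ character $c$ with $f(c)=c$ and is left unchanged. Each \emph{new} transition $(U_x,\langle x_{P+1},\alpha,\beta\rangle,W_y)$ --- where $P=P(U_x,V_x)$ and which was created at step~(3.1.2) from an original transition $(V_x,\langle x,\alpha,\beta\rangle,W_y)\in\delta$ --- is replaced by the $P$ passive transitions along a shortest passive path from $U_x$ to $V_x$ in $\automaton_N^x$, followed by that original transition; symmetrically for the pop variant created at step~(3.1.3). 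This expansion reads $P$ copies of $x$ and one more $x$, i.e.\ $x^{P+1}=f(x_{P+1})$, in the push/passive case and $x^{P}\overline{x}=f(\overline{x}_{P+1})$ in the pop case, and it goes from $U_x$ to $W_y$ while realizing the same net stack action $\langle\cdot,\alpha,\beta\rangle$. Concatenating all expanded blocks yields a walk of $\automaton_N$ from $S_\automaton$ to $D_\automaton$ reading exactly $f(w')$.

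The delicate point, which I expect to be the main obstacle, is checking that this expansion is consistent with the stack: the inserted passive path must be walkable under whatever stack configuration $\rho'$ holds when it fires the indexed transition. The key facts are that a passive transition $(\cdot,\langle x,\gamma,\gamma\rangle,\cdot)$ leaves the stack unchanged and, by step~(5.1) of Algorithm~\ref{algo1}, exists for \emph{every} top symbol $\gamma\neq x$, and that the popped symbol $\alpha$ of any indexed transition necessarily differs from $x$ (a push never stacks $x$ over $x$, a decapsulation pops a protocol $y\neq x$, and the final transition pops $Z_0\notin\alphabet$). Therefore, when $\rho'$ takes the indexed transition the stack top is some $\alpha\neq x$, the whole passive path can be traversed while keeping $\alpha$ on top, and the trailing original transition then finds the symbol $\alpha$ it requires. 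The resulting run reaches $D_\automaton$ with empty stack, so $f(w')\in L(\automaton_N)$; together with the forward inclusion this proves $f(L(\automaton_N'))=L(\automaton_N)$.
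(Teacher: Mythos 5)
Your proof is correct and takes essentially the same route as the paper's: the forward inclusion via $f(w)=w$ on index-$1$ characters together with $\delta\subseteq\delta'$, and the reverse inclusion by expanding each transition in $\delta'\setminus\delta$ into the passive path it abbreviates followed by the original transition, so that the expanded run reads exactly $f(w')$. Your explicit stack-consistency check (passive transitions exist for every stack top $\gamma\neq x$ by step (5.1) of Algorithm~\ref{algo1}, and the popped symbol of an indexed transition is never $x$, including the $Z_0$ case) is a point the paper's proof leaves implicit, so if anything your version is slightly more complete.
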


\begin{proof}
We prove this by double inclusion:
\begin{enumerate}
\item $L(\automaton_N) \subseteq f(L(\automaton_N')):$ $\forall w \in L(\automaton_N), f(w) =w$ (remind that we consider that $x = x_1$). So $f(L(\automaton_N)) = L(\automaton_N)$ (1). On the other hand, the construction of $\automaton_N'$ by algorithm~\ref{algo2} does not delete any character, transition, state or set of final states from $\automaton_N$. So $L(\automaton_N) \subseteq L(\automaton_N')$. And thus $f(L(\automaton_N)) \subseteq f(L(\automaton_N'))$ (2)\\
By (1) and (2), $L(\automaton_N) \subseteq f(L(\automaton_N'))$.
\item $L(\automaton_N) \supseteq f(L(\automaton_N')):$ Let $w'=x_i^1x_j^2\dots x_l^n$ be a word in $L(\automaton_N')$ and let $t'_1t'_2\dots t'_{n+1}$ be a sequence of transitions accepting $w'$. For each transition $t'_m =(U_{x^{m-1}},\langle x_{k}^{m-1},\alpha,\beta \rangle   ,W_{x^m})$ ($1<m<n$) in this sequence, s.t. $t'_m \in \delta'\verb+\+\delta$, there is a sequence of $k-1$ passive transitions in $\automaton_N$ (because the creation of $t'_m$ in $\automaton_N'$ requires the existence of such a sequence in $\automaton_N$). This sequence begins from the state $U_{x^{m-1}}$ and is followed by a transition $(V_{x^{m-1}},\langle x^{m-1},\alpha,\beta \rangle   ,W_{x^m})$. Thus, this sequence in $\automaton_N$ matches with $f(x_{k}^{m-1})$ And for each $t'_m$ in $\automaton_N'$, either $t'_m \in \delta$ so $t'_m$ matches with $x_k^{m+1}$ (if $k=1$), or $t'_m \in \delta'\verb+\+\delta$ and there is a sequence of transitions in $\automaton_N$ which matches with $f(x_k^{m+1})$ (if $k>1$). And since, by definition, $f(w')=f(x_i^1)f(x_j^2)\dots f(x_l^n)$, $f(w') \in L(\automaton_N)$. Thus $f(L(\automaton_N')) \subseteq L(\automaton_N)$. \qed
\end{enumerate}
\end{proof}

\noindent Let $w'$ be the shortest word accepted by $\automaton_N'$, let $Nb_{push}^{\automaton_N'}(w')$ be the minimum number of pushes in a sequence of transitions accepting $w'$ in $\automaton_N'$, and let $Nb_{pop}^{\automaton_N'}(w')$ be the minimum number of pops in a sequence of transitions accepting $w'$ in $\automaton_N'$.
\begin{lemma}
\label{prop4}
The length of the shortest path (sequence of transitions) accepting $w'$ in $\automaton_N'$ is $1+Nb_{push}^{\automaton_N'}(w')+Nb_{pop}^{\automaton_N'}(w')$ and $Nb_{pop}^{\automaton_N'}(w') = Nb_{push}^{\automaton_N'}(w')+1$.
\end{lemma}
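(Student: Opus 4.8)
The plan is to establish the two assertions separately: first the balance $Nb_{pop}^{\automaton_N'}(w') = Nb_{push}^{\automaton_N'}(w')+1$ by a stack-height argument that holds for \emph{every} accepting run, and then the length formula by showing that no accepting run of the shortest word uses a passive transition, so that its transitions reduce to the initial $\epsilon$-move together with the pushes and the pops.

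For the balance, I would observe that in $\automaton_N'$ the stack height changes in only three ways: a push (encapsulation) raises it by one, a pop (decapsulation) lowers it by one, and the final transition $(D_x,\langle x,Z_0,\emptyset\rangle,D_\automaton)$ also lowers it by one, whereas the initial transition $(S_\automaton,\langle\epsilon,Z_0,Z_0\rangle,U_x)$ and every passive transition leave it unchanged. An accepting run starts at $S_\automaton$ with stack $Z_0$ (height $1$) and ends at $D_\automaton$ with empty stack (height $0$); counting the final $Z_0$-pop among the pops, the number of pops minus the number of pushes therefore equals $1$ in \emph{every} accepting run. Since pops exceed pushes by exactly one in each run, the run minimizing the pushes also minimizes the pops, and the two minima satisfy $Nb_{pop}^{\automaton_N'}(w') = Nb_{push}^{\automaton_N'}(w')+1$.

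For the length formula, the key structural fact is that the only $\epsilon$-transition of $\automaton_N'$ is the initial one, so each remaining transition reads exactly one letter of $w'$; hence every accepting run of $w'$ has length $1+|w'|$ and the claim reduces to $|w'| = Nb_{push}^{\automaton_N'}(w')+Nb_{pop}^{\automaton_N'}(w')$, i.e.\ to the absence of passive transitions. I would prove this by minimality of $w'$: suppose an accepting run contained a passive transition, and take a maximal block of consecutive passive transitions $U_x^0\to\cdots\to U_x^k$ ($k\ge 1$) inside some sub-automaton $A_N^x$ on $\states_x$; being part of an accepting run it is followed by a stack-changing or final transition $U_x^k\to W_y$. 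Since $P(U_x^0,U_x^k)\le k<\infty$ and $W_y\in Succ(U_x^k)$, Algorithm~\ref{algo2} has created a single shortcut transition from $U_x^0$ to $W_y$ reading one indexed letter and carrying the same stack operation $\langle\alpha,\beta\rangle$ as $U_x^k\to W_y$. Replacing the $k$ passive transitions and the following transition by this shortcut yields an accepting run (identical stack behaviour, identical endpoints) of a strictly shorter word, contradicting the minimality of $w'$.

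The main obstacle is the bookkeeping in this compression step, where I would be careful about the edge cases. I must check that the exclusion $Succ(V_x)\setminus\{U_x\}$ in Algorithm~\ref{algo2} never discards the transition I need: this holds because any non-passive transition is an encapsulation, a decapsulation, or the final $Z_0$-pop, each of which either changes the protocol index --- recall that encapsulating, and hence decapsulating, a protocol in itself is forbidden, so the target index $y\ne x$ --- or moves to $D_\automaton$, whence $W_y\ne U_x^0$. I must also confirm that a passive block occurring right after the initial $\epsilon$-move or right before the final transition is still captured: the former because the shortcut starts at the very state $U_x^0$ reached by the $\epsilon$-move, the latter because the final transition $(D_x,\langle x,Z_0,\emptyset\rangle,D_\automaton)$ has exactly the form treated in step~(3.1.2) of Algorithm~\ref{algo2} and thus admits shortcuts. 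Once it is established that no accepting run of $w'$ contains a passive transition, every such run consists of the initial move together with $Nb_{push}^{\automaton_N'}(w')$ pushes and $Nb_{pop}^{\automaton_N'}(w')$ pops, so its length is $1+Nb_{push}^{\automaton_N'}(w')+Nb_{pop}^{\automaton_N'}(w')$, which completes the proof.
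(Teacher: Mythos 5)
Your proof is correct and follows essentially the same route as the paper's: both exploit the minimality of $w'$ together with the shortcut transitions created by Algorithm~\ref{algo2} to show that an accepting run of $w'$ uses no passive transitions (the paper phrases this at the word level, merging consecutive characters $x_i^m x_j^{m+1}$ with $x^m=x^{m+1}$ into a single indexed character, which is exactly your block-compression viewed on the word rather than the run), and then count the remaining transitions as the initial $\epsilon$-move plus pushes and pops with the final $Z_0$-pop giving $Nb_{pop}^{\automaton_N'}(w') = Nb_{push}^{\automaton_N'}(w')+1$. Your write-up is in fact somewhat more careful on two points the paper leaves implicit --- the stack-height invariant establishing the pop/push balance for \emph{every} accepting run, and the verification that the exclusion $Succ(V_x)\setminus\{U_x\}$ and the blocks adjacent to the initial and final transitions never invalidate the compression step.
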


\begin{proof}
First we prove that $w'$ cannot be in the form of $\dots x_i^mx_j^{m+1}\dots$ with $x^m = x^{m+1}$. Let $U_{x^m}$ be the state in which $\automaton_N'$ is before reading the character $x_i^m$. Let $V_{x^{m+1}}$ be the state in which $\automaton_N'$ is after reading $x_i^m$ and before reading $x_j^{m+1}$. And let $W_y$ be the state in which $\automaton_N'$ is after reading $x_j^{m+1}$. If $x^m=x^{m+1}$, then, by construction, there is a transition $(U_{x^m}, \langle x_{i+j}^m,\alpha,\beta \rangle   ,W_y)$ (where $\alpha,\beta$ is a pop of $y$ or a push of $x^m$ if $x^m\neq y$). So, if $x_i^mx_j^{m+1}$ is replaced by $x_{i+j}^m$ in $w'$, the word obtained is shorter then $w'$ and is also accepted by $\automaton_N'$. Thus, the shortest word accepted by $\automaton_N'$ is $w'=x_i^1x_j^2\dots x_k^n$ where $x^m \neq x^{m+1}$  for $(1 \leq m < n)$.

By construction of $\automaton_N'$, for each character $x_i^m$ in $w'$, there is a push transition $(U_{x^m},\langle x_i^m,\alpha,x^m\alpha \rangle   ,W_y)$ if $x_i^m \in \alphabet'$ or a transition $(U_{x^m},\langle x_i^m,y,\emptyset \rangle   ,W_y)$ if $x_i^m \in \overline{\alphabet'}$. So all transitions in the shortest sequence that accepts $w'$ are pops or pushes, except the first transition from the initial state $(S_A,\langle \epsilon,Z_0,Z_0 \rangle   ,U_{x^1})$ and the final pop $(V_{x^n},\langle x_k^n,Z_0,\emptyset \rangle   ,D_A)$. The number of other pops is equal to the number of pushes (in order to have $Z_0$ at the top of the stack before the final transition).\qed
\end{proof}

\noindent Now let $w$ be a word accepted by $\automaton_N$, let $Nb_{push}^{\automaton_N}(w)$ be the minimum number of pushes in a sequence of transitions accepting $w$ in $\automaton_N$, and let $Nb_{pop}^{\automaton_N}(w)$ be the minimum number of pops in a sequence of transitions accepting $w$ in $\automaton_N$.
\begin{lemma}
\label{prop5}
For each $w \in L(\automaton_N)$:
\begin{itemize}
\item $Nb_{pop}^{\automaton_N}(w) = Nb_{pop}^{\automaton_N'}(g(w))$
\item $Nb_{push}^{\automaton_N}(w) = Nb_{push}^{\automaton_N'}(g(w))$
\end{itemize}
\end{lemma}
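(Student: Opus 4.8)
The plan is to establish both equalities simultaneously by setting up a correspondence between the accepting sequences of $w$ in $\automaton_N$ and those of $g(w)$ in $\automaton_N'$ that leaves the number of pushes and the number of pops unchanged. The key observation, already at the heart of Lemma~\ref{prop3}, is that a passive transition performs no stack operation, whereas each transition of $\delta'\setminus\delta$ added by Algorithm~\ref{algo2} carries exactly the stack operation (a push, a pop, or none) of the single non-passive transition it bypasses. Hence turning a bypass transition back into a passive sub-path followed by that transition, or the reverse, preserves both counts.

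First I would handle the easy inequality by expansion. Given any accepting sequence of $g(w)$ in $\automaton_N'$, I replace each bypass transition $(U_x,\langle x_{k+1},\alpha,\beta\rangle,W_y)$, where $k=P(U_x,V_x)$, by the length-$k$ passive path from $U_x$ to $V_x$ whose existence Algorithm~\ref{algo2} used to create it, followed by the absorbed transition $(V_x,\langle\cdot,\alpha,\beta\rangle,W_y)\in\delta$. Since $f(g(w))=w$, the result is an accepting sequence of $w$ in $\automaton_N$ with the same numbers of pushes and pops, so $Nb_{push}^{\automaton_N}(w)\le Nb_{push}^{\automaton_N'}(g(w))$ and $Nb_{pop}^{\automaton_N}(w)\le Nb_{pop}^{\automaton_N'}(g(w))$.

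For the reverse inequalities I would compress an accepting sequence of $w$ in $\automaton_N$. After the initial $\epsilon$-transition I cut the sequence into maximal blocks, each a run of passive transitions closed by one non-passive transition (a push, a protocol pop, or the final $Z_0$-pop). Because a passive transition keeps the current protocol while every push and pop changes it, consecutive blocks carry different protocols, so each block reads precisely a maximal run $x^{k+1}$ (push or $Z_0$-pop block) or $x^{k}\overline{x}$ (pop block); these are exactly the letters $x_{k+1}$ and $\overline{x}_{k+1}$ of $g(w)$. Replacing each block by the matching bypass transition gives an accepting sequence of $g(w)$ in $\automaton_N'$ with the same push and pop counts, hence the opposite inequalities and the two equalities.

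I expect the compression step to be the main obstacle. The bypass of index $k+1$ exists only when the absorbed passive run is a shortest one, of length $k=P(U_x,V_x)$, so I must argue that the passive sub-paths of $w$ coincide with shortest paths (equivalently, that $g(w)$ really is accepted by $\automaton_N'$); this is where the statement tacitly restricts to the words produced by the construction. Membership $g(w)\in L(\automaton_N')$ itself would follow from Lemma~\ref{prop3} (some preimage of $w$ is accepted) together with the merging argument in the proof of Lemma~\ref{prop4} (collapsing two consecutive letters with the same protocol keeps a word accepted). A cleaner way to obtain the reverse inequality, avoiding compression, is to observe that the numbers of pushes and of pops are invariants of $w$ alone: a barred letter can only be read by a pop, and a maximal unbarred run that changes protocol must finish with exactly one push, whatever accepting sequence is taken; combined with the expansion direction this pins both minima to the same value.
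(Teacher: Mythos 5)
Your proof follows the same two-way simulation as the paper's own: the paper first compresses a minimal accepting sequence of $w$ block by block into bypass transitions to conclude $g(w)\in L(\automaton_N')$, then gets the reverse bound by expanding a hypothetical cheaper sequence for $g(w)$ back into $\automaton_N$ and deriving a contradiction; you state the same two constructions as a pair of direct inequalities, which is equivalent. Your expansion half is exactly the paper's argument, and your block decomposition (maximal passive runs closed by a push, a protocol pop, or the final $Z_0$-pop, with consecutive blocks carrying distinct protocols because pushes and pops always change the current protocol) is a more explicit version of the one sentence the paper devotes to compression. (Note that your identification of a pop block $x^{k}\overline{x}$ with the letter $\overline{x}_{k+1}$ uses the ``shortest preimage'' reading of $g$, which matches the paper's verbal gloss but not its displayed formula $g(w)=x_iy_j\dots z_k$; the paper is internally inconsistent here and your reading is the workable one.)

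The obstacle you single out is real, and it is present in the paper's own proof as well, which asserts without justification that a run of $j-i+1$ passive transitions followed by a push or pop can be replaced by a bypass ``indexed by $j-i+1$'': Algorithm~\ref{algo2} only creates bypasses indexed $P(U_x,V_x)+1$, i.e.\ along \emph{shortest} passive paths, and the run lengths are dictated by $w$ itself, so the accepting sequence may be forced through a non-shortest run for which no matching index exists. Of your two repairs, the second is the good one: the counts are invariants of $w$ alone (each barred letter forces a pop, each maximal unbarred run followed by a protocol change forces exactly one push, the remaining transitions are passive), which is correct and even strengthens the lemma's counting content. But your first repair does not close the gap: the merging step inside the proof of Lemma~\ref{prop4} suffers from the very same index problem --- the merged transition carries index $P(U_x,V_x)+1$, which may undershoot $i+j$, so merging need not preserve the $f$-image --- and one can build networks (a passive sub-automaton with a direct edge $U_x\rightarrow V_x$ alongside a forced two-hop detour) in which some preimage of $w$ is accepted by $\automaton_N'$, consistently with Lemma~\ref{prop3}, while $g(w)$ itself is not. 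So membership $g(w)\in L(\automaton_N')$, which both your argument and the paper's require, can genuinely fail, and the lemma as literally stated is false in such corner cases. The damage is local: what Proposition~\ref{prop6} actually needs is only that compressing a minimal sequence for $w$ along shortest passive runs yields \emph{some} word of $L(\automaton_N')$ with the same push count, and your compression plus the invariance observation delivers exactly that once you drop the insistence that the compressed word be $g(w)$.
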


\begin{proof}
Let $t_1t_2\dots t_n$ be the sequence of transitions accepting $w$ in $\automaton_N$ with minimum number of pushes and pops. For each sequence $t_{i}t_{i+1}\dots t_j$ of passive transitions, followed by a push or a pop transition, there is a transition with the same push or pop and the same input character indexed by $j-i+1$. So $g(w) \in L(\automaton_N')$.

Let $t_1't_2'\dots t_{n'}'$ be the sequence of transitions with minimum pushes and pops accepting $g(w)$ in $\automaton_N'$. It is clear that $f(g(w)) = w$. So if there is a sequence $t_1''t_2''\dots t_{n'}''$ that accepts $g(w)$ with less pops and pushes than $t_1't_2'\dots t_{n'}'$, then each $t_i'' \in \delta'\verb+\+\delta$ can be replaced by a sequence of passive transitions followed by a pop or a push in $\automaton_N$. And theses sequences accept $f(g(w))$ (which is $w$) with less pops and pushes than in the sequence $t_1t_2\dots t_n$, which contradicts the fact that it minimizes the number of pops and pushes to accept $w$. \qed
\end{proof}



\begin{proposition}
\label{prop6}
The word accepted by $\automaton_N$ which minimizes the number of pushes is $f(w')$, where $w'$ is the shortest word (i.e., with minimum number of characters) accepted by $\automaton_N'$.
\end{proposition}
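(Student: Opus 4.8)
The plan is to combine the three preceding lemmas into a short chain of equalities and inequalities. The goal is to show that among all words accepted by $\automaton_N$, the one minimizing the number of pushes is exactly $f(w')$ where $w'$ is the shortest word accepted by $\automaton_N'$. First I would fix notation: let $w'$ be a shortest (minimum-length) word in $L(\automaton_N')$, and let $w^*$ be an arbitrary word in $L(\automaton_N)$ minimizing $Nb_{push}^{\automaton_N}(\cdot)$. By Lemma~\ref{prop3} we know $f(w') \in L(\automaton_N)$, so $f(w')$ is a legitimate candidate and $Nb_{push}^{\automaton_N}(w^*) \leq Nb_{push}^{\automaton_N}(f(w'))$. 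The task is to prove the reverse inequality, i.e. that $f(w')$ achieves the minimum.

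The key idea is that, by Lemma~\ref{prop4}, the \emph{length} of the shortest word $w'$ accepted by $\automaton_N'$ is tied linearly to its push/pop count: the accepting sequence has length $1 + Nb_{push}^{\automaton_N'}(w') + Nb_{pop}^{\automaton_N'}(w') = 2\cdot Nb_{push}^{\automaton_N'}(w') + 2$, and since each character of $w'$ corresponds to exactly one push or one pop transition (the reasoning in Lemma~\ref{prop4} shows every non-initial, non-final transition in the minimal accepting sequence is a push or a pop, and consecutive characters differ), the number of characters $|w'|$ equals $Nb_{push}^{\automaton_N'}(w') + Nb_{pop}^{\automaton_N'}(w') = 2\cdot Nb_{push}^{\automaton_N'}(w') + 1$. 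Hence minimizing $|w'|$ over $L(\automaton_N')$ is exactly the same as minimizing $Nb_{push}^{\automaton_N'}(w')$. So $w'$, being a shortest word, also minimizes the number of pushes over all words in $L(\automaton_N')$.

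Now I would transport this back to $\automaton_N$ using Lemma~\ref{prop5}, which gives $Nb_{push}^{\automaton_N}(w) = Nb_{push}^{\automaton_N'}(g(w))$ for every $w \in L(\automaton_N)$. Taking $w = w^*$ the push-minimizer in $\automaton_N$, we get a word $g(w^*) \in L(\automaton_N')$ with $Nb_{push}^{\automaton_N'}(g(w^*)) = Nb_{push}^{\automaton_N}(w^*)$. Since $w'$ minimizes the push count over $L(\automaton_N')$, we have $Nb_{push}^{\automaton_N'}(w') \leq Nb_{push}^{\automaton_N'}(g(w^*)) = Nb_{push}^{\automaton_N}(w^*)$. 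On the other hand, applying Lemma~\ref{prop5} to $w = f(w')$ (noting $g(f(w')) = w'$ because $f(w')$ is already in the canonical ``blocks of repeated letters'' form and $g$ inverts $f$ on such words) gives $Nb_{push}^{\automaton_N}(f(w')) = Nb_{push}^{\automaton_N'}(w')$. Chaining these, $Nb_{push}^{\automaton_N}(f(w')) = Nb_{push}^{\automaton_N'}(w') \leq Nb_{push}^{\automaton_N}(w^*)$, which together with the trivial reverse inequality yields equality, so $f(w')$ indeed minimizes the number of pushes in $\automaton_N$.

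The main obstacle I anticipate is the bookkeeping around the maps $f$ and $g$, specifically verifying the identity $g(f(w')) = w'$ for the shortest word $w'$. This requires that $w'$ be in the normalized form where no two adjacent characters share the same underlying protocol letter — precisely the property established at the start of the proof of Lemma~\ref{prop4}. I would make sure to invoke that normalization explicitly, since without it $g \circ f$ need not be the identity (e.g. $g(f(x_i x_j)) = x_{i+j} \neq x_i x_j$ when the two letters coincide). A secondary subtlety is the implicit claim that minimizing pushes is the right objective: one must recall from Lemma~\ref{prop4} that $Nb_{pop} = Nb_{push} + 1$, so minimizing pushes simultaneously minimizes pops and hence the total number of adaptation functions, tying the statement back to the original motivation.
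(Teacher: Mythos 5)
Your proof is correct and follows essentially the same route as the paper's: Lemma~\ref{prop4} to identify the shortest word $w'$ of $L(\automaton_N')$ as the push-minimizer in $\automaton_N'$, then Lemma~\ref{prop5} (via $f$ and $g$) to transfer that minimality to $\automaton_N$, with the paper phrasing the transfer as an \emph{ad absurdum} argument where you use a direct chain of inequalities. Your explicit checks that $f(w')\in L(\automaton_N)$ (Lemma~\ref{prop3}) and that $g(f(w'))=w'$ thanks to the normalization of $w'$ established in the proof of Lemma~\ref{prop4} are steps the paper leaves implicit; they tighten, but do not change, the argument.
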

\begin{proof}
By lemma \ref{prop4}, the shortest word accepted by $\automaton_N'$ minimizes the number of pushes in $\automaton_N'$ (since the number of pushes grows linearly as a function of the length of the word). Let $w'$ be this word.

\noindent Suppose that there is a word $w$ accepted by $\automaton_N$ such that $Nb_{push}^{\automaton_N}(w) < Nb_{push}^{\automaton_N'}(w')$. By lemma \ref{prop5}, $Nb_{push}^{\automaton_N'}(g(w)) = Nb_{push}^{\automaton_N}(w)$, which leads to $Nb_{push}^{\automaton_N'}(g(w)) < Nb_{push}^{\automaton_N'}(w')$, while $w' = \arg \min Nb_{push}^{\automaton_N'}$. \textit{Ad absurdum}, $f(w')$ is the word which minimizes the number of pushes in $\automaton_N$.\qed
\end{proof}

\subsection{The shortest path as a shortest word} 
\label{sec:short_word}
In order to find the shortest word accepted by $\automaton_N$ (resp. $\automaton_N'$), the CFG $G_N$ such that $L(G_N)=L(A_N)$ (resp. $L(A_N')$) is computed. The shortest word in $L(G_N)$ is then generated.
\subsubsection{From the PDA to the CFG.}
\label{sec:pda-cfg}
The transformation of a PDA into a CFG is well-known. We adapted a general method described in \cite{Hop06} to transform $\automaton_N$(resp. $\automaton_N'$) into a CFG. 
The output of algorithm \ref{PDA-CFG} is a CFG $G_N = ({\mathcal N},\Sigma, [S_G],{\mathcal P})$ (resp. $({\mathcal N},\Sigma', [S_G],{\mathcal P})$) where ${\mathcal N}$ is the set of nonterminals (variables), $\Sigma$ (resp. $\Sigma'$) is the input alphabet, $[S_G]$ is the initial symbol (initial nonterminal) and ${\mathcal P}$ is the set of production rules. Except $[S_G]$, nonterminals are in the form $[UXV]$ where $U, V \in \states$ and $X \in \Gamma$ (resp. $\states'$ and $\Gamma'$). The demonstration of the correctness of this transformation is also in \cite{Hop06}.

\begin{algorithm}
\caption{Converting a PDA to a CFG}
\label{PDA-CFG}
\begin{algorithmic}
\STATE Input: PDA $\automaton_N = (\states, \Sigma, \Gamma, \delta, S_A, Z_0, \{D_A\} )$ (resp. 
Transformed PDA $\automaton_N' = (\states', \Sigma', \Gamma', \delta', S_A, Z_0, \{D_A\} )$)
\STATE Output: a CFG $G_N = ({\mathcal N},\Sigma, [S_G],{\mathcal P})$ (resp. $({\mathcal N},\Sigma', [S_G],{\mathcal P})$)
\STATE (1) Create the nonterminal $[S_G]$
\STATE (2) For each state $U_x \in \states$
\STATE \ \ \ (2.1) create a nonterminal $[S_AZ_0U_x]$ and a production $[S_G] \rightarrow [S_AZ_0U_x]$
\STATE (3) For each transition $(U_x,\langle x,\alpha,\beta \rangle,V_y)$
\STATE \ \ \ (3.1) If $\beta=\emptyset$ (pop), create a nonterminal $[U_x\alpha V_y]$ and a production $[U_x\alpha V_y] \rightarrow x$
\STATE \ \ \ (3.2) If $\beta=\alpha$ (passive transition), create for each state $W\in \states$ (resp. $\states'$)
\STATE \ \ \ \ \	\ \ \ (3.2.1) Nonterminals $[U_x\alpha W]$ and $[V_y\alpha W]$
\STATE \ \ \ \ \	\ \ \ (3.2.2) A production $[U_x\alpha W] \rightarrow x[V_y\alpha W]$
\STATE \ \ \ (3.3) If $\beta=x\alpha, \ x \in \Gamma$ (push), create for each states $(W,W')\in {\states}^2$ (resp. ${\states'}^2$)
\STATE \ \ \ \ \	\ \ \ (3.3.1) Nonterminals $[U_x\alpha W']$, $[V_y\alpha W]$ and $[W\alpha W']$
\STATE \ \ \ \ \	\ \ \ (3.3.2) A production $[U_x\alpha W'] \rightarrow x[V_yxW][W\alpha W']$
\end{algorithmic}
\end{algorithm}
\noindent{\bf Complexity.} The number of production rules in $G_N$ is bounded by  $1+|\states|+(|\delta|\times|\states|^2)$ (resp. $1+|\states'|+(|\delta'|\times|\states'|^2)$. As all the nonterminals (except $[S_G]$) are in the form $[U_x\alpha V_y]$ with $U_x,V_y \in \states$ and $\alpha \in \Gamma$, the number of non terminals is bounded by $O(|\alphabet|\times|\states|^2)$ (resp. $O(|\alphabet|\times|\states|^2)$).
The worst case complexity of algorithm \ref{PDA-CFG} is in $O(|\delta|\times|\states|^2)$ (resp. $O(|\delta'|\times|\states'|^2)$). W.r.t. the definition of the network, the upper bound is in $O(|\alphabet|^5\times|\mathcal{V}|^2\times|E|)$ (resp. $O(|\alphabet|^5\times|\mathcal{V}|^5)$).
\\

\noindent{\bf Example.} This method transforms the PDA in Fig.~\ref{fig:tpda} into a CFG. Figure~\ref{fig:cfg} is a subset of production rules of the obtained CFG. This subset allows generating the shortest trace of a feasible path in the network in Fig.~\ref{fig:network}. For instance, the transition $(V_b,\langle \overline{b_2},a,\emptyset \rangle,D_a)$ gives the production rule $[V_baD_a]\rightarrow \overline{b_2}$. The transition $(U_a,\langle a,Z_0,aZ_0\rangle,V_b)$ gives all the production rules $[U_aZ_0X']\rightarrow a[V_baX][XZ_0X']$ where $X,X' \in \states'$, including the production $[U_aZ_0D_A]\rightarrow a[V_baD_a][D_aZ_0D_A]$ .

\noindent{\bf Remark.} There are two mechanisms of acceptance defined for PDAs: an input word can be accepted either by \textit{empty stack} (i.e., if the stack is empty after reading the word) or by \textit{final state} (i.e., if a final state is reached after reading the word \-- even if the stack is not empty). 
Algorithm \ref{PDA-CFG} takes as input an automaton which accepts words by empty stack. $\automaton_N$ and $\automaton_N'$ accept words by empty stack and by final state, because the only transitions that empty the stack are those that reach the final state (the transitions in the form $(D_x,\langle x,Z_0,\emptyset \rangle   ,D_A),\ x\in In(D)$). Thus, Algorithm~\ref{PDA-CFG} performs correctly with $\automaton_N$ or $\automaton_N'$ as input.

\subsubsection{The shortest word generated by a CFG.}
\label{sec:shortest_word}
To find the shortest word generated by $G_N$, a function $\ell$ associates each nonterminal to the length of the shortest word that it generates.

More formally, $\ell :{\{{\mathcal N}\cup \Sigma \cup \{\epsilon\}\}}^* \text{or } {\{{\mathcal N}\cup \Sigma' \cup \{\epsilon\}\}}^* \rightarrow \mathbb{N}\cup\{\infty\}$ s.t.:
\begin{itemize}
\item if $w = \epsilon$ then $\ell(w) =0$,
\item if $w \in \Sigma \text{ or } \Sigma'$ then $\ell(w) = 1$,
\item if $w=\alpha_1\dots\alpha_n$ (with $\alpha_i \in \{{\mathcal N}\cup \Sigma \cup \{\epsilon\}\} \text{ or }  \{{\mathcal N}\cup \Sigma' \cup \{\epsilon\}\}$) then $\ell(w) = \sum_{i=1}^n\ell(\alpha_i)$.
\end{itemize}
Algorithm \ref{calcul-ell} computes the value of $\ell([X])$ for each $[X] \in {\mathcal N}$.
\begin{algorithm}
\caption{Compute the values $\ell([X])$ for each nonterminal $[X] \in {\mathcal N}$}
\label{calcul-ell}
\begin{algorithmic}
\STATE Input: $G_N =({\mathcal N},\Sigma, [S_G],{\mathcal P}) \text{ or } ({\mathcal N},\Sigma', [S_G],{\mathcal P})$
\STATE Output: $\ell([X])$ for each nonterminal $[X]$
\STATE (1) Initialize each $\ell([X])$ to $\infty$
\STATE (2) While there is at least one $\ell([X])$ updated at the previous iteration do
\STATE \ \ \ \ \	\ \ \ \ (2.1) For each production rule $[X] \rightarrow \alpha_1\dots\alpha_n$ in ${\mathcal P}$
\STATE \ \ \ \ \	\ \ \ \ \ \ \ \ \ (2.1.1) $\ell([X]) \gets \min\{\ell([X]),\sum_{i=1}^n\ell(\alpha_i)\}$
\end{algorithmic}
\end{algorithm}

\begin{proposition}
\label{algo-terminaison}
Algorithm \ref{calcul-ell} terminates at worst after $|{\mathcal N}|$ iterations, and each $\ell([X])\ ([X] \in {\mathcal N})$ obtained is the length (number of characters) of the shortest word produced by $[X]$.
\end{proposition}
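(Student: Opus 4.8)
The plan is to recognize Algorithm~\ref{calcul-ell} as a monotone fixed-point (Bellman--Ford-style) relaxation and to prove both claims by tying the number of iterations to the height of a minimal derivation tree. Throughout I would write $L([X])$ for the true length of the shortest word derivable from $[X]$ (with $L([X])=\infty$ if $[X]$ derives no terminal word), extending $L$ to terminals and to $\epsilon$ exactly as $\ell$ is defined. The target quantity satisfies the fixed-point equation $L([X]) = \min_{[X]\to\alpha_1\cdots\alpha_n\in\mathcal{P}}\sum_{i=1}^n L(\alpha_i)$, and step~(2.1.1) is precisely one relaxation of this equation. A first easy observation is that the values $\ell([X])$ start at $\infty$ and, being updated by a $\min$, are non-increasing and bounded below, so the iteration is monotone.

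Next I would establish \emph{soundness}: by induction on the iteration count, whenever $\ell([X])$ takes a finite value $v$, there genuinely exists a word of length $v$ derivable from $[X]$. The base case is the initialization to $\infty$; in the inductive step an update at~(2.1.1) sets $\ell([X])$ to $\sum_i\ell(\alpha_i)$ for some production, and the induction hypothesis turns each finite $\ell(\alpha_i)$ into a derivable word of that length, whose concatenation is derivable from $[X]$. This gives $\ell([X])\ge L([X])$ at every step.

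The heart of the argument is \emph{completeness}, for which I would introduce $L_k([X])$, the length of the shortest word derivable from $[X]$ by a derivation tree of height at most $k$. An induction on $k$ shows that after $k$ iterations $\ell([X])\le L_k([X])$: a minimal height-$\le k$ tree uses some production $[X]\to\alpha_1\cdots\alpha_n$ whose children root height-$\le(k-1)$ subtrees, so by the induction hypothesis the relaxations of the previous iteration already achieved $\ell(\alpha_i)\le L_{k-1}(\alpha_i)$, and relaxing over that production drives $\ell([X])$ down to at most $\sum_i L_{k-1}(\alpha_i)=L_k([X])$. Since updates only decrease values, this bound is unaffected by the order in which productions are processed within an iteration.

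Finally I would bound the height of a \emph{minimal} derivation tree by $|\mathcal{N}|$, which yields correctness and termination at once. The argument is a cut-and-paste (pumping) one: if some root-to-leaf path repeated a nonterminal, replacing the subtree at the higher occurrence by the subtree at the lower one yields a derivation of a word that is a factor of the original, hence no longer; iterating removes all repetitions without increasing the length, so a shortest word admits a tree in which every root-to-leaf path visits each nonterminal at most once, i.e.\ of height $\le|\mathcal{N}|$. Thus $L([X])=L_{|\mathcal{N}|}([X])$, and combining with soundness and completeness gives $\ell([X])=L([X])$ after at most $|\mathcal{N}|$ iterations; at that point the fixed point is reached, no value is updated, and the while loop halts. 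I expect this height bound to be the main obstacle: the only delicate point is verifying that the cut never lengthens the word, which rests on the removed portion of the upper subtree deriving a string that contains the lower word as a factor, so the length can only drop or stay equal, even in the presence of $\epsilon$-length symbols.
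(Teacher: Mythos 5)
Your proof is correct, but it takes a genuinely different route from the paper's. The paper argues by \emph{progress}: an induction on iterations showing that each iteration (except the last) fixes the correct final $\ell$-value of at least one further nonterminal, by locating, among the nonterminals whose value is still wrong, one whose shortest derivation does not depend on the others; hence $|{\mathcal N}|$ iterations suffice. You instead run the standard Bellman--Ford-style analysis: a soundness invariant ($\ell([X])\ge L([X])$ throughout, every finite value being witnessed by an actual derivation) paired with a completeness invariant ($\ell([X])\le L_k([X])$ after $k$ iterations, $L_k$ ranging over derivation trees of height at most $k$), closed off by the bound $L_{|{\mathcal N}|}=L$ obtained by subtree surgery on minimal trees. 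Both arguments rest on the same combinatorial core --- derivations of shortest words need not repeat a nonterminal --- but you invoke it only along root-to-leaf (ancestor) paths, which is exactly the form in which it is true; the paper's blanket claim that ``no nonterminal is used twice or more'' in a shortest derivation is, read literally, false (a nonterminal may legitimately recur in sibling subtrees, as in $[X]\rightarrow a[Y][Y]$) and must be read as a statement about ancestor chains for its dependency-ordering step to go through. Your decomposition buys three things: the $\infty$ case (a nonterminal generating no word) is absorbed by soundness instead of being special-cased; insensitivity to the order of relaxations within one iteration is made explicit via monotonicity; and you obtain the quantitative per-iteration statement $\ell\le L_k$, strictly more information than a count of finalized nonterminals. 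What the paper's argument buys is brevity. Two micro-points: first, after a cut the \emph{whole} new word is in general not a factor of the old one (replace $\alpha w_v\beta$ by $w_v$ inside a context and contiguity is lost); only the new subtree's yield is a factor of the old subtree's yield, which is all your length argument needs and is what your closing sentence correctly states --- just phrase the earlier claim accordingly. Second, you share with the paper a harmless off-by-one: once the values stabilize (at worst at iteration $|{\mathcal N}|$), the while-test of Algorithm~\ref{calcul-ell} still needs one further update-free pass to detect termination, so ``terminates after $|{\mathcal N}|$ iterations'' is off by one in both accounts and irrelevant to the complexity bound.
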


\begin{proof}
We prove that at each iteration (except the last one), there is a least one nonterminal $[X]$ s.t. $\ell([X])$ is updated with the length of the shortest word that $[X]$ produces.Thus, the update of all $\ell([X])$ with their correct values is done at worst in $|{\mathcal N}|$ iterations. We proceed by induction on the number of iterations:\\
\\
{\bf Basis:} There is no $\epsilon$-production in $G_N$, and there is at least one production in the form $[X] \rightarrow x$ where $[X]\in {\mathcal N}$ and $x \in \Sigma$ (resp. $\Sigma'$) (see algorithm \ref{PDA-CFG} for the construction of $G_N$). For each $[X] \in {\mathcal N}\ s.t.\ \{[X]\rightarrow x\} \in {\mathcal P}, \ell([X]) = 1$. And  algorithm \ref{calcul-ell} assigns these values to each $\ell([X])$ at the first iteration.\\
\noindent{\bf Induction:} Suppose that at iteration $n$, there are at least $n'\ (n\leq n' <|{\mathcal N}|)$ nonterminals $[X]$ s.t. the algorithm has assigned the correct value to $\ell([X])$. So there are $|{\mathcal N}| - n'$ nonterminals which have not the correct $\ell$-value yet. 
\begin{itemize}
\item Either there is a nonterminal $[Y]$ s.t. $\ell([Y])$ has not already the correct value (i.e., it can already be updated), but for all productions $[Y] \rightarrow \gamma_1|\dots|\gamma_m$ (each $[Y]\rightarrow \gamma_i$ is a production)
all nonterminals in $\gamma_1,\dots,\gamma_m$ have their correct $\ell$-values. And thus, $\ell([Y])$ will be updated with the correct value at the end of the following iteration.
\item Or, for each nonterminal $[Y]$ with a wrong $\ell$-value, there is at least a nonterminal in each of its productions $[Y]\rightarrow \gamma_1|\dots|\gamma_m$ which has not its correct $\ell$-value yet.
\begin{itemize}
\item Either each $\gamma_i$ contains a nonterminal which generates no word. Thus, $[Y]$ generates no word and the value of $\ell([Y]) = \infty$ is correct.
\item Or all nonterminals in each $\gamma_i$ generate a word. However, in the derivation of the shortest word generated by any nonterminal, no nonterminal is used twice or more (otherwise a shorter word can be generated by using this nonterminal once). Thus, among all nonterminals with a wrong $\ell$-value, there is one which does not use others to generate its shortest word. So its $\ell$-value is already correct or it will be updated to the correct value at the following iteration.\qed
\end{itemize}

\end{itemize}
\end{proof}

\noindent{\bf Complexity.} As there is at worst $|{\mathcal N}|$ iteration of the \textit{while} loop, 
the complexity of algorithm \ref{calcul-ell} is in $O(|{\mathcal N}|\times|{\mathcal P}|)$ which is $O(|\alphabet|^8\times|V|^4\times|E|)$ (resp. $O(|\alphabet|^{8}\times|V|^7)$) in the network model

There are several algorithms which allow the generation of a uniform random word of some length from a CFG \cite{Hickey83,Flajolet94,Mairson94}, or, more recently, a non-uniform random generation of a word according to some targeted distribution \cite{Denise2010,Gardy2010}. 
For instance, the \emph{boustrophedonic} and the \emph{sequential} algorithms described in \cite{Flajolet94} generate a random labeled combinatorial object of some length from any decomposable structure (including CFGs). The boustrophedonic algorithm is in $O(n\log n)$ (where $n$ is the length of the object) and the sequential algorithm is in $O(n^2)$ but may have a better average complexity. Both algorithms use a precomputed table of linear size. This table can be computed in $O(n^2)$. These algorithms require an unambiguous CFG, and the CFG computed by algorithm~\ref{PDA-CFG} can be inherently ambiguous depending on the input network (as several feasible paths can use the same sequence of protocols and thus have the same trace $\trace_C$). However, the unambiguity requirement is only for the randomness of the generation. Recall that our goal is to generate the trace of the shortest feasible path. Thus, we do not take into consideration the randomness and the distribution over the set of shortest traces. In order to generate the shortest word in $L(G_N)$, the boustrophedonic algorithm can take $G_N$ and $\ell([S_G])$ as input (recall that $\ell([S_G])$ is the length of the shortest word generated by $G_N$). Thus, the generation of the shortest word $w$ (resp. $w'$) would have been in $O(|w|^2)$ (resp. $O(|w'|^2)$), where $|w|$ denotes the length (number of characters) of $w$. This complexity includes the precomputation of the table. However, this complexity hides factors depending of the size of the CFG, the construction of the precomputed table requires at least one pass over all the production rules.

Actually, as the $\ell$-values are already computed, the generation can simply be done in linear time in the length of the shortest word. The standard algorithm~\ref{short-word} takes as input the CFG and the $\ell$-values. Each nonterminal is replaced by the left part with minimal $\ell$-value among its productions.

\begin{algorithm}
\caption{Computing the shortest word generated by $G_N$}
\label{short-word}
\begin{algorithmic}
\STATE Input: $G_N = ({\cal N},\Sigma', [S_G],{\cal P})$ and all $\ell$-values
\STATE Output: $w$, the shortest word generated by $G_N$
\STATE  
\COMMENT{$[S_G] \rightarrow \gamma_1|\gamma_2|\dots|\gamma_m$ are all production rules with $[S_G]$ as left part}
\STATE (1) $w \gets \arg \min\{\ell(\gamma_1),\dots,\ell(\gamma_m)\}$
\STATE (2) While there is at least one nonterminal in $w$
	\STATE \ \ \ \ \ $(2.1)$ For each nonterminal $[X]$ in $w$ do
		\STATE  \ \ \ \ \ \ \ \ \ \ $(2.1.1)$ Replace $[X]$ in $w$ by $\arg \min \{\ell(\gamma_1'),\dots,\ell(\gamma_{m'}')\}$
		\STATE  \ \ \ \ \ \ \ \ \ \ 
		\COMMENT{$[X] \rightarrow \gamma_1'|\dots|\gamma_{m'}'$ are all production rules with $[X]$ as left part}
\end{algorithmic}
\end{algorithm}

\noindent {\bf Complexity.} Since there is no derivation of the form $[X]\rightarrow\epsilon$ in $G_N$ (see algorithm \ref{PDA-CFG}), all branches in the derivation tree end with a character from $\Sigma$ (resp. $\Sigma'$). The length of each branch is at worst $|{\cal N}|$ (as a nonterminal does not appears twice in the same branch, otherwise a shorter word could be derivated by using this nonterminal once). Thus, the number of derivations and the complexity of algorithm \ref{short-word} are bounded by $O(|{\cal N}|\times|w|)$ (resp. $O(|{\cal N}|\times|w'|)$), which corresponds to $O(|\alphabet|^3\times|\nodes|^2\times|w|)$ (resp. $O(|\alphabet|^3\times|\nodes|^2\times|w'|)$) in the network model.
\\

\noindent{\bf Example.} Algorithm \ref{calcul-ell} gives $\ell([S_G])=3$. Algorithm~\ref{short-word} computes the shortest word using the production rules in Fig.~\ref{fig:cfg}. The derivation is:
\begin{equation*}
[S_G]\overset{(1)}{\vdash} [S_AZ_0D_A]\overset{(2)}{\vdash}[U_aZ_0D_A]\overset{(3)}{\vdash}a[V_baD_a][D_aZ_0D_A] \overset{(4)}{\vdash}a\overline{b}_2[D_aZ_0D_A] \overset{(5)}{\vdash}a\overline{b}_2a
\end{equation*}
Thus, the shortest word accepted by the transformed PDA is $a\overline{b}_2a$. And the shortest trace of a feasible path is $f(a\overline{b}_2a)=ab\overline{b}a$.






\subsubsection{From the shortest word to the path.}
\label{sec:word-path}
If the goal is to minimize the number of nodes in the path, algorithm \ref{find-C} takes as input the shortest word $w$ accepted by $\automaton_N$. Otherwise, as $w'$ is the shortest word accepted by $\automaton_N'$ and generated by $G_N$, according to proposition \ref{prop6}, $f(w')$ is the word which minimizes the number of pops and pushes in $\automaton_N$. In such a case it is the trace $T_C$ of the shortest feasible path $C$ in the network $N$. It is possible that several paths match with the trace $T_C=w$ (resp. $f(w')$). In such a case, a load-balancing policy can choose a path.

Algorithm \ref{find-C} is a dynamic programming algorithm that computes $C$. It starts at the node $S$ and takes at each step all the links in $E$ which match with the current character in $T_C$. Let $T_C = x^1x^2\dots x^n$ $(x^i \in \alphabet\cup\overline{\alphabet})$. At each step $i$, the algorithm starts from each node $U$ in $Nodes[i]$ and adds to $Links[i]$ all links $(U,V)$ which match with $x^i$. It also adds each $V$ in $Nodes[i+1]$. When reaching $D$, it backtracks to $S$ and selects the links from $D$ to $S$.

\begin{algorithm}
\caption{Find the shortest path}
\label{find-C}
\begin{algorithmic}
\STATE Input: Network $N$ and $T_C$
\STATE Output: Shortest path $C$
\STATE (1) $Nodes[1] \gets S$ ; $i \gets 1$
\STATE (2) While $D$ is not reached do
\STATE \ \ \ \ \	\ \ \ \ (2.1) for each $U \in Nodes[i]$ and each $V \in {\mathcal V}\ s. t.\ (U,V) \in E$ do
\STATE \ \ \ \ \	\ \ \ \ \ \ \ \ \ (2.1.1) If $x^i\in\alphabet$, $x^i\in Out(U)$, $x^i\in In(V)$ and $(x^{i-1},x^i)\in P(U)$
\STATE \ \ \ \ \	\ \ \ \ \ \ \ \ \ \ \ \ \ \ \ \ \ \ \ \ (2.1.1.1) Add $(U,V)$ to $Links[i]$ and $V$ to $Nodes[i+1]$
\STATE \ \ \ \ \	\ \ \ \ \ \ \ \ \ (2.1.2) If $x^i \in \overline{\alphabet}$, $x^i\in Out(U)$, $x^i\in In(V)$ and $\overline{(x^{i},x^{i-1})}\in P(U)$
\STATE \ \ \ \ \	\ \ \ \ \ \ \ \ \ \ \ \ \ \ \ \ \ \ \ \ (2.1.2.1) Add $(U,V)$ to $Links[i]$ and $V$ to $Nodes[i+1]$
\STATE \ \ \ \ \	\ \ \ \ (2.2) $i++$
\STATE (3) Backtrack from $D$ to $S$ by adding each covered link in the backtracking to $C$.
\end{algorithmic}
\end{algorithm}

\noindent{\bf Complexity.} The $while$ loop stops exactly after $T_C$ steps, because it is sure that there is a feasible path of length $|T_C|$ if $T_C$ is accepted by the automaton $\automaton_N$. At each step, all links and nodes are checked in the worst case. Thus, algorithm \ref{find-C} is in $O(|T_C|\times|{\mathcal V}|\times|E|)$ in the worst case.
\\

\noindent{\bf Example.} From the shortest trace $ab\overline{b}a$, algorithm \ref{find-C} computes the only feasible path in the network on Fig.~\ref{fig:network}, which is $S, a, U, b, V, \overline{b}, W, a, D$.
\section{Conclusion}

\renewcommand{\arraystretch}{1.2}
\begin{table*}[t]
	\footnotesize
	\centering
		\begin{tabular}{|l|c|c|}
		\hline
		\multicolumn{1}{|c|}{\multirow{2}{*}{Algorithm}}&\multicolumn{2}{c|}{Upper-Bound Complexity}\\
		\cline{2-3} 
		 &  Minimizing hops & Minimizing enc./dec.\\
		 \hline
		 
		Algo. \ref{algo1}: Network to PDA& \multicolumn{2}{c|}{$O(|\alphabet|^3\times|E|)$}\\
		\hline
		Algo. \ref{algo2}: Transform PDA& / &$O(\max(|\alphabet|^4\times|E|,|\alphabet|^3\times|\nodes|^3)$\\
		\hline
		Algo. \ref{PDA-CFG}: PDA to CFG& $O(|\alphabet|^5\times|\mathcal{V}|^2\times|E|)$ & $O(|\alphabet|^5\times|{\mathcal V}|^5)$\\
		\hline
		Algo. \ref{calcul-ell}: Shortest word length& $O(|\alphabet|^8\times|{\mathcal V}|^4\times|E|)$ & $O(|\alphabet|^{8}\times|{\mathcal V}|^7)$\\
		\hline
		Algo. \ref{short-word}: Shortest word&$O(|\alphabet|^3\times|\nodes|^2\times|w|)$& $O(|\alphabet|^3\times|\nodes|^2\times|w'|)$\\
		\hline
		Algo. \ref{find-C}: Shortest path& \multicolumn{2}{c|}{$O(|T_C|\times |{\mathcal V}|\times|E|)$}\\
		\hline
		\end{tabular}
	\caption{Algorithms and their complexities}
	\label{tab:summary}
\end{table*}

The problem of path computation in a multi-layer network has been studied in the field of intra-domain path computation but is less addressed in the inter-domain field with consideration of the Pseudo-Wire architecture. There was no polynomial solution to this problem and the models used were not expressive enough to capture the encapsulation/decapsulation capabilities described in the Pseudo-Wire architecture. 

\ajout{In this paper, we provide algorithms that compute the shortest path in a multi-layer multi-domain network, minimizing the number of hops or the number of encapsulations and decapsulations. The presented algorithms involve Automata and Language Theory methods. A Push-Down Automaton models the multi-layer multi-domain network. It is then transformed in order to bypass passive transitions and converted into a Context-Free Grammar. The grammar generates the shortest protocol sequence, which allows to compute the path matching this sequence.}

The different algorithms of our methodology have polynomial upper-bound complexity as summarized by Table \ref{tab:summary}. Compared to the preliminary version of this work, the proofs of correctness of the algorithms are detailed and the complexity analysis is significantly refined (the highest algorithm complexity is in  $O(|\alphabet|^{8}\times|{\mathcal V}|^7)$ instead of  $O(|\alphabet|^{11}\times|{\mathcal V}|^7\times|E|^2)$).

In order to figure out the whole problem of end-to-end service delivery, we plan to extend our solution to support end-to-end \emph{Quality of Service} constraints and model all technology constraints on the different layers \ajout{(conversion or ``\textit{mapping}" of protocols)}. \ajout{As a future work, we also plan to investigate which part of our algorithms can be distributed (e.g., Can the domains publish their encapsulation/decapsulation capabilities without disclosing their internal topology?) and how such a solution can be established on today's architectures.}

\begin{figure}[!]
\centering
\subfigure[Network]{\label{fig:network}\includegraphics[scale=0.35]{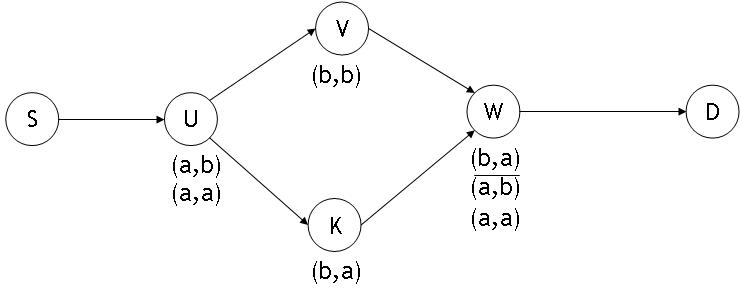}}
\subfigure[Corresponding PDA]{\label{fig:pda}\includegraphics[width=\textwidth]{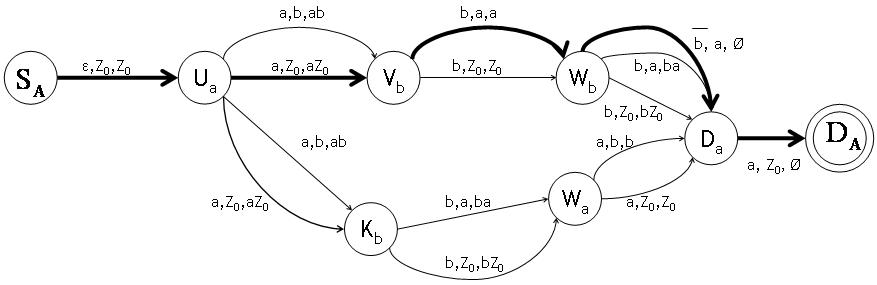}}
\subfigure[Transformed PDA]{\label{fig:tpda}\includegraphics[width=\textwidth]{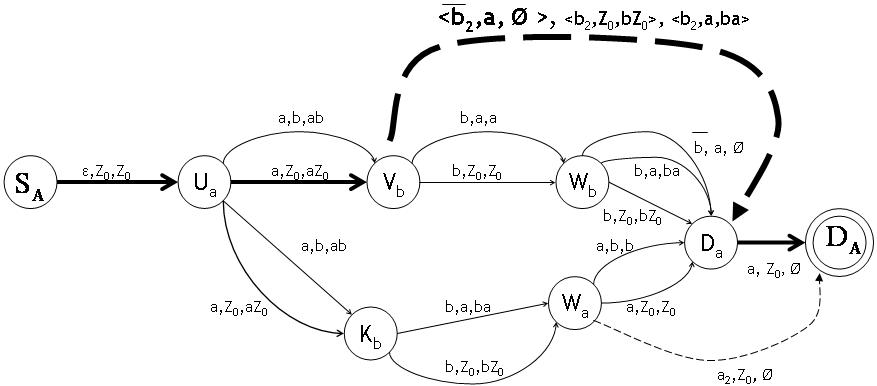}}
\subfigure[Subset of $G_N$ which generates $T_C$]{\label{fig:cfg}\includegraphics[scale=0.45]{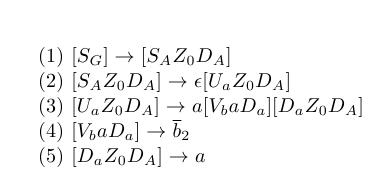}}
\caption{Example}
\end{figure}

\section*{Acknowledgment}
The first author would like to thank Kaveh Ghasemloo for his help about Algorithm~\ref{calcul-ell}. This work is partially supported by the ETICS-project, funded by the European Commission. Grant agreement no.: FP7-248567 Contract Number: INFSO-ICT-248567.

\appendix
\section{List of notations}
 \label{sec:appendix}
In order to facilitate the paper reading, Table \ref{tab:symbols} summarizes the symbols used in the paper. 
\renewcommand{\arraystretch}{1}
\begin{table*}[!]
	\centering
		\begin{tabular}{|p{\demilargeur}|p{\demilargeur}|}
		\hline
		{\bf Symbols and their signification}&{\bf Symbols and their signification}\\
		\hline
		$\nodes$: The set of nodes of the network&$E$: The set of links of the network\\

		$G=(\nodes,E)$: The graph modeling the network topology&$S$: The source node\\
	
		$D$: The destination node&$\alphabet$: The alphabet (set of protocol)\\
	
		$\overline{\alphabet}$: $\{\overline{a}\ s.t.\ a\in\alphabet\}$&P(U): The set of adaptation functions of node $U$\\
	
		$\mathcal{ED}$: The set of all possible encapsulations and passive functions&$\overline{\mathcal{ED}}$: The set of all possible decapsulations\\
		
		$In(U)$: The set of protocols  that node $U$ can receive&$Out(U)$: The set of protocols that node $U$ can send\\
	
		$\alphabetPass(U)$: The set of protocols that can passively cross node $U$&$(a,a)$: Passive function\\
		
		$(a,b)$: Encapsulation of protocol $a$ in $b$&$\overline{(a,b)}$: Decapsulation of protocol $a$ from $b$\\
		
		$\trace_C$: The sequence of protocols used over the path $C$ (trace of $C$)&$\sequence_C$: Sequence of adaptation functions involved in path $C$ (transition sequence of $C$)\\
	
		$\sequenceSansPassif_C$: The well-parenthesized sequence of $C$&$\automaton_N$: Automaton corresponding to network $N$\\
	
		$\states$: The set of states of $\automaton_N$&$\Sigma$: The input alphabet of $\automaton_N$\\
	
		$\Gamma$: The stack alphabet of $\automaton_N$&$\delta$: The set of transitions of $\automaton_N$\\

		$S_\automaton$: Initial state of $\automaton_N$&$Z_0$: Initial stack symbol\\

		$D_\automaton$: Final state of $\automaton_N$&$(U_x, \langle x, \alpha,\beta\rangle, V_y)$: A transition between states $U_x$ and $V_y$, where $x$ is read, $\alpha$ is popped from the stack and $\beta$ is pushed on the stack\\

		$\states_a$: \textit{Sub\--automaton} indexed by $a$&$\automaton_N'$: Transformed PDA\\

		$L(\automaton_N)$: Language accepted by $\automaton_N$&$Nb_{push}^{\automaton_N}(w)$: Minimum number of pushes in a sequence of transitions accepting the word $w$ in $\automaton_N$\\
	
		$G_N$: Context-free grammar corresponding to network $N$&$\mathcal{N}$: The set of nonterminals of $G_N$\\
	
		$[S_G]$: Initial nonterminal (axiom) of $G_N$&$\mathcal{P}$: The set of production rules of $G_N$\\
	
		$\ell([X])$: Length of the shortest word generated by nonterminal $[X]$& \\
		\hline

		\end{tabular}
		
	\caption{List of symbols used in the paper.}
	\label{tab:symbols}
\end{table*}

\color{black}
\bibliographystyle{plain}
\bibliography{Automata}

\end{document}